\documentclass[11pt]{article}
\usepackage[margin=1in]{geometry}                
\geometry{a4paper}                   
\usepackage[]{authblk}
\usepackage{graphicx}
\usepackage{amssymb}
\usepackage{amsmath}
\usepackage[ruled,vlined,linesnumbered]{algorithm2e}
\usepackage[usenames,dvipsnames,svgnames,table]{xcolor}
\usepackage{bm}

\renewcommand{\approx}[1]{{\underline{#1}}}

\newtheorem{lemma}{Lemma}
\newtheorem{theorem}{Theorem}

\newenvironment{proof}{\trivlist\item[\hskip\labelsep{\bf Proof}:]}{\hfill
 $\Box$ \endtrivlist}
\newenvironment{proofSpec}{\trivlist\item[\hskip\labelsep{\bf Proof of Theorem~\ref{thm:fptasDAGs} for the DAG case}:]}{\hfill
 $\Box$ \endtrivlist}
\newenvironment{proofSpecLemma}{\trivlist\item[\hskip\labelsep{\bf Proof of Lemma~\ref{lemma:list'}}:]}{\hfill
 $\Box$ \endtrivlist}

\newcommand{\eps}{\varepsilon}

\newcommand{\fl}[1]{\left\langle{#1}\right\rangle}
\newcommand{\inset}[2]{\in \{{#1},\dots,{#2}\}}
%

\title{\vspace{-1.3cm}Combinatorial decomposition approaches for efficient counting and random generation FPTASes}
\date{}                                           

\author{Romeo Rizzi$^{1}$, Alexandru I. Tomescu$^{2}$}
\affil{\normalsize $^{1}$Department of Computer Science, University of Verona, Italy\\
$^{2}$Helsinki Institute for Information Technology HIIT,\\Department of Computer Science, University of Helsinki, Finland\\
\texttt{romeo.rizzi@univr.it}, \texttt{tomescu@cs.helsinki.fi}}

\begin{document}
\maketitle
\vspace{-1cm}
\begin{abstract}
Given a combinatorial decomposition for a counting problem, we resort to the simple scheme of approximating large numbers by floating-point representations in order to obtain efficient Fully Polynomial Time Approximation Schemes (FPTASes) for it. The number of bits employed for the exponent and the mantissa will depend on the error parameter $0 < \varepsilon \leq 1$ and on the characteristics of the problem. Accordingly, we propose the first FPTASes with $1 \pm \varepsilon$ relative error for counting and generating uniformly at random a labeled DAG with a given number of vertices. This is accomplished starting from a classical recurrence for counting DAGs, whose values we approximate by floating-point numbers. 

After extending these results to other families of DAGs, we show how the same approach works also with problems where we are given a compact representation of a combinatorial ensemble and we are asked to count and sample elements from it. We employ here the floating-point approximation method to transform the classic pseudo-polynomial algorithm for counting 0/1 Knapsack solutions into a very simple FPTAS with $1 - \varepsilon$ relative error. Its complexity improves upon the recent result (\v{S}tefankovi\v{c} et al., SIAM J. Comput., 2012), and, when $\varepsilon^{-1} = \Omega(n)$, also upon the best-known randomized algorithm (Dyer, STOC, 2003). To show the versatility of this technique, we also apply it to a recent generalization of the problem of counting 0/1 Knapsack solutions in an arc-weighted DAG, obtaining a faster and simpler FPTAS than the existing one.


\end{abstract}


\section{Introduction}

In this paper we consider two main types of counting problems. In the first (combinatorial family), the input consists of a single integer $n$ and we are interested in counting/generating the objects of the $n$th slice of a family parametrized by $n$, such as all labelled trees on $n$ vertices or all well-formed formulas on $n$ parentheses; in this paper we tackle labeled directed acyclic graphs (DAGs) on $n$ vertices, and two DAG subclasses. In the second (combinatorial ensemble), we are given a structure and we want to count and sample all of its substructures with a given property, such as the spanning trees or the perfect matchings of a graph given in input; we tackle here the problem of counting 0/1 Knapsack solutions, and a generalization of this problem to a DAG.


A general result by Jerrum, Valiant and Vazirani~\cite{DBLP:journals/tcs/JerrumVV86} is that the problem of exact random uniform generation of `efficiently verifiable' combinatorial structures is reducible to the counting problem. Since in many cases the counting problem is either hard, or simply expensive in practice,~\cite{DBLP:journals/tcs/JerrumVV86} also shows that for self-reducible problems almost uniform random generation and randomized approximate counting are inter-reducible. This fact has determined that randomized approximate approaches, in particular based on Markov chains, have attracted most attention. Indeed, in spite of the fact that these approaches are general, in the sense that they do not rely on problem specific combinatorial decompositions, they allow for faster algorithms when approximate solutions are good enough.

The intended message of the present work is that the compromise towards approximate solutions can also take place in the context of methods based on combinatorial decompositions. In order to facilitate this, we build up a minimalistic layer of floating-point arithmetic suitably tailored for this purpose. Our idea dates back to Denise and Zimmermann~\cite{DBLP:journals/tcs/DeniseZ99}, which considered floating-point arithmetic for uniform random generation of decomposable structures. A `decomposable structure' is a combinatorial structure definable in terms of the `standard constructions' of disjoint union, Cartesian product, Sequence, Cycle, and Set (see e.g.~\cite{AnalyticCombinatoricsBook} for details). These include, for example, unordered or ordered trees, permutations, set partitions, but do not include more complex objects such as DAGs, nor substructures of a structure given in input, such as solutions of a given 0/1 Knapsack instance. Moreover, even though~\cite{DBLP:journals/tcs/DeniseZ99} relates the relative error with the length of the mantissa, their results are not stated in terms of FPTASes. An FPTAS is a deterministic algorithm that estimates the exact solution within relative error $1 \pm \eps$, in time polynomial in the input size and in $1/\eps$.

Given an error $0 < \eps \leq 1$, we represent large integers as floating-point numbers having an exact exponent, so that no overflow can occur, and a mantissa whose length depends on $\eps$ and is only as long as to guarantee a $1 \pm \eps$ relative error.

We show that floating-point arithmetic can be added as a technical layer on top of any suitable combinatorial decomposition of the problem at hand, obtaining efficient state-of-the-art, both deterministic FPTASes for counting, and practical random generation algorithms with explicit error probability bounds. Some of our FPTASes are actually linear in $\log(1/\eps)$, which, in the case of counting problems, means a linear dependence on the length of the output.

Until now, for all the problems considered in this paper, Monte Carlo algorithms had the best running times, with a performance guarantee either proven (like for counting 0/1 Knapsack solutions~\cite{DBLP:conf/stoc/Dyer03}) or just generally accepted (like for DAGs generation~\cite{DBLP:journals/endm/MelanconDB01}). Recently, other authors have proposed deterministic algorithms for approximate counting~\cite{Stefankovic:2012:DPA:2339941.2339944,DBLP:conf/focs/GopalanKMSVV11}. However, our deterministic algorithms are the first ones to reach and even improve upon the running times of the Monte Carlo algorithms. Considered also that we get rid of the error probability $\delta > 0$, it is quite remarkable that we close the gap between deterministic and randomized algorithms.

In the same way that Markov chains offer a fascinating layer of reusable theory, our approach is also unifying, with the required math for bounding the run-time in terms of $\eps$ embodied in the technical floating-point arithmetic layer. Even though its level of generality is not comparable, it still offers a conceptual tool that can guide and inspire the design of new algorithms. In this new scenario, the length of the mantissa becomes a resource, and minimizing its consumption leads one to reduce the number of subsequent approximation phases in the processing of the data flow.
This view indeed supported us in gaining an $n$ extra factor in Thm.~\ref{thm:kDAG-main}. Moreover, the algorithms inspired by this framework do not require the difficult ad-hoc analysis of rapidly mixing properties of Markov chains, necessary for a conclusive word on the actual computational complexity of a given problem.

Based on these facts, we hope to see a renewal of interest on methods grounded on the combinatorial decomposition of the problem at hand both in practical and theoretical studies on counting and random generation, where the problems allow.




\subsection{Counting and random generation for a combinatorial family}

To illustrate the floating-point approximation scheme for a combinatorial family, we focus on DAGs. They constitute a basic class of graphs, with applications in various fields. As in the case of other combinatorial objects, the problem of generating uniformly at random (u.a.r., for short) a DAG with $n$ labeled vertices was first tackled with a Markov Chain algorithm~\cite{DBLP:journals/endm/MelanconDB01,DBLP:journals/ipl/MelanconP04}. 
The main issue behind such a randomized approach lies in the difficulty of proving the rapidly mixing property.
This was the case here for DAGs, as such a proof never appeared. 
Steinsky~\cite{DBLP:journals/soco/Steinsky03} proposed a nice generalization of Pr\"ufer's encoding of labeled trees to labeled DAGs, and put forth ranking and unranking algorithms. These led to a deterministic random generation algorithm working in time $O(n^5)$ and space $O(n^5M(n^2))$ bits, where $M(t)$ is the slowdown factor of multiplying two $t$-bit numbers. 

Our solution is based on the decomposition of DAGs by sources, initially proposed by Robinson~\cite{R73} to obtain a counting recurrence of labeled DAGs with $n$ vertices and a given number of sources. We exploit this decomposition by generating a labeled DAG recursively, at each step generating its sources (and their out-going arcs) by using the values of the counting recurrence as probability distribution. To further illustrate this method, in Appendix~\ref{appendix:DAGs} we consider two recently studied subclasses of DAGs, \emph{essential} DAGs (essDAGs)~\cite{Andersson97}, and \emph{extensional} DAGs (extDAGs)~\cite{MT10}.

\begin{theorem}
A labeled DAG, an essential DAG, or extensional DAG, with $n$ vertices can be generated u.a.r.~in time $O(n^3)$, provided a table of size $O(n^4)$ bits, computable in time $O(n^5M(n))$, is available.
\label{thm:exactDAGs}
\end{theorem}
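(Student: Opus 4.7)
My plan is to piggy-back on Robinson's classical decomposition of labeled DAGs by their source set. Let $D(m)$ denote the number of labeled DAGs on $m$ vertices. Since the source set of a DAG is uniquely determined, one can classify a DAG by its number of sources $k$, giving $D(m) = \sum_{k=1}^{m} T(m,k)$ with $T(m,k) := \binom{m}{k}(2^k-1)^{m-k} D(m-k)$: choose which $k$ of the $m$ vertices are sources, independently assign to each of the remaining $m-k$ vertices a non-empty subset of those sources as in-neighbors (this is exactly the condition that forces the set of sources of the resulting DAG to be precisely $S$), and finally fit an arbitrary DAG on the $m-k$ non-source vertices. The key point I would emphasize is that this decomposition is bijective, so each DAG is counted exactly once.

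Pre-computation. I would tabulate $T(m,k)$ together with the partial sums $F(m,k) = \sum_{j \le k} T(m,j)$ for $1 \le k \le m \le n$. Since $D(n)$ has $\Theta(n^2)$ bits, each entry takes $O(n^2)$ bits; with $O(n^2)$ entries the table occupies $O(n^4)$ bits. Each entry is the product of a binomial coefficient, a power $(2^k-1)^{m-k}$ obtained by repeated squaring, and a previously computed value from the table, i.e., a handful of multiplications on $O(n^2)$-bit numbers, which fits in the claimed $O(n^5 M(n))$ pre-processing budget.

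Sampling. Given the table, I would generate a DAG on the vertex set $V = \{1,\dots,n\}$ recursively: draw $k$ from the distribution $T(n,k)/D(n)$ by binary search on $F(n,\cdot)$; pick a uniformly random $k$-subset $S \subseteq V$ to be the source set; for each $v \in V \setminus S$ independently pick a uniformly random non-empty subset of $S$ as $v$'s in-neighbors; and recurse on $V \setminus S$. Uniformity is immediate from the bijective decomposition: each DAG corresponds to exactly one sequence of choices along the recursion, and the product of the conditional probabilities telescopes to $1/D(n)$. For the running time, the recursion has depth $\le n$; at each level the arc choices cost $O(n^2)$ bit-work, the binary search for $k$ costs $O(n^2)$ on the cumulative table, and picking the source subset costs $O(n)$, for a total of $O(n^3)$.

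For essDAGs and extDAGs I would invoke the analogous source-decomposition recurrences developed in Appendix~\ref{appendix:DAGs}; since those recurrences have the same shape as Robinson's (only the combinatorial interface factor between sources and non-sources changes), the same table-based sampling strategy applies verbatim and meets the same asymptotic bounds. The step I expect to be the most delicate — and which is really the content of the appendix — is verifying that the source decomposition remains bijective inside these restricted classes, since it is not a priori obvious that the subgraph obtained by peeling off the sources of an essDAG (resp. extDAG) is itself an essDAG (resp. extDAG) or that every way of re-attaching sources stays within the class. Once that structural lemma is secured, the sampling algorithm and its analysis carry over without change.
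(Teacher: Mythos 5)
Your proposed counting recurrence is incorrect, and this is a fatal gap. You claim the bijection
\[
T(m,k) = \binom{m}{k}(2^k-1)^{m-k}\,D(m-k),
\]
by letting \emph{every} non-source vertex independently choose a non-empty subset of the source set $S$ as its $S$-in-neighborhood, asserting that this ``is exactly the condition that forces the set of sources of the resulting DAG to be precisely $S$.'' That condition is \emph{sufficient} but not \emph{necessary}: a vertex $v\notin S$ is already a non-source as soon as it has any in-neighbor among the other non-source vertices, whether or not it has an in-neighbor in $S$. For instance the path $1\to 2\to 3$ (with source set $\{1\}$) cannot be produced by your construction with $k=1$, since vertex $3$ has an empty $\{1\}$-in-neighborhood. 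Numerically, your recurrence gives $D(3) = \binom{3}{1}\cdot 1\cdot 3 + \binom{3}{2}\cdot 3\cdot 1 + 1 = 19$, whereas the true count is $25$. Consequently the induced sampler would not be uniform.

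The fix is precisely Robinson's recurrence~(\ref{eq-dags}), which the paper uses: after peeling off the $k$ sources, one must condition on the number $s$ of sources of the \emph{inner} DAG on $n-k$ vertices. Only those $s$ vertices are forced to have a non-empty in-neighborhood inside the removed source set (giving the factor $(2^k-1)^s$); the remaining $n-k-s$ vertices already have in-neighbors inside the inner DAG and may choose an arbitrary, possibly empty, subset of the removed sources (giving $2^{k(n-k-s)}$). Hence the table must be indexed by both $n$ and the source count $k$, i.e.\ one stores $a(n,k)$, not just $D(m)$, and the generation algorithm (Algorithm~\ref{generation-1}) must split the non-source vertices into ``inner sources'' $S$ and the rest $X$ and treat them differently when wiring in-neighborhoods. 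Your downstream reasoning about table sizes, $O(n^2)$-bit entries, binary/successor search for $k$, the $O(n^2)$ bit-work per level, and recursion depth $\le n$ is structurally compatible with the corrected recurrence, but as written the proof does not establish the theorem. Also, your closing remark about essDAGs and extDAGs assumes both use a ``source-decomposition of the same shape''; in the paper essDAGs are instead decomposed by vertices of \emph{maximum depth} (Lemma~\ref{lemma:essDAGs}) and extDAGs are built by adding one source at a time, so that claim would need revisiting as well.
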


We then show that, instead of storing the values of the counting recurrence as exact numbers on $O(n^2)$ bits, we can store approximate floating-point numbers with $O(\log n)$ bits for the exponent, and $O(\log(n/\eps))$ bits for the mantissa. This leads to the first deterministic FPTASes for counting and random generation, as stated in the following theorems, where $a(n)$, $d(n)$, $e(n)$ denote the number of labeled DAGs, essential DAGs, and extensional DAGs, respectively, on $n$ vertices.

\begin{theorem}
For any $n \geq 1$, and for every $0 < \eps \leq 1$, we can compute an $O(\log(n/\eps))$-bit number $Z$ such that $(1-\eps)a(n) \leq Z \leq a(n)$, $(1-\eps)d(n) \leq Z \leq d(n)$, or $(1-\eps)e(n) \leq Z \leq e(n)$, in time $O(n^3\log(n/\eps)M(\log (n/\eps)))$.
\label{thm:counting-DAGs}
\end{theorem}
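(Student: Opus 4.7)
The plan is to evaluate Robinson's refined recurrence for $a(n,k)$ (the number of labeled DAGs on $n$ vertices having exactly $k$ sources, cf.~\cite{R73}) bottom-up, but replacing each exact integer in the dynamic-programming table by a floating-point representation with exponent and mantissa of carefully chosen length, and then outputting $Z = \sum_{k} \widetilde{a}(n,k)$, where $\widetilde{a}$ denotes the approximated values. Since $a(n) \leq 2^{O(n^2)}$, the exponent of any table entry fits in $O(\log n)$ bits, while the mantissa length is set to $m = \Theta(\log(n/\eps))$ so that the accumulated relative error is at most $\eps$.

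First, I would set up a minimal ``approximated arithmetic'' layer providing three primitives: approximate addition, approximate multiplication, and exact multiplication by a power of two (the latter being just a shift of the exponent, which is essential because Robinson's recurrence contains factors of the form $2^{k(n-k)}$ that we do not want to materialize as integers). Each approximate operation always rounds towards zero, so that (i) $\widetilde{x} \leq x$ holds as a one-sided guarantee compatible with the theorem's $(1-\eps)a(n) \leq Z \leq a(n)$ conclusion, and (ii) the relative error per operation is at most $2^{-m}$. Binomial coefficients required by the recurrence are precomputed once in the same format.

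The key step, and the main technical obstacle, is that the \emph{unrefined} Robinson identity $a_n = \sum_k (-1)^{k+1}\binom{n}{k}2^{k(n-k)}a_{n-k}$ is an inclusion-exclusion with alternating signs, for which floating-point cancellation can destroy every significant bit; working instead with the sources-refined recurrence, every summand is nonnegative, so relative errors compose multiplicatively along the computation. A standard bound $(1-2^{-m})^T \geq 1 - T\cdot 2^{-m}$ then shows that after the $T = O(n^3)$ operations of the bottom-up evaluation of the table $\{\widetilde{a}(n',k)\}_{1\leq k \leq n' \leq n}$, the cumulative relative error stays below $\eps$ provided $m = \Theta(\log(n/\eps))$, with one further $(1-2^{-m})$ factor absorbed by the final summation over $k$.

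For the complexity: the table has $O(n^2)$ entries, each computed via $O(n)$ floating-point operations on $O(\log(n/\eps))$-bit numbers. Each such operation costs $O(M(\log(n/\eps)))$ time for mantissa multiplication and $O(\log(n/\eps))$ time for mantissa alignment and addition, giving overall $O(n^3 \log(n/\eps) M(\log(n/\eps)))$ time as claimed. The same scheme, applied to the analogous source-based recurrences for essential and extensional DAGs worked out in Appendix~\ref{appendix:DAGs}, yields the stated bounds for $d(n)$ and $e(n)$ as well.
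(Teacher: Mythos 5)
Your proposal takes essentially the same route as the paper: approximate the Robinson source-decomposition table $a(n,k)$ bottom-up with floating-point numbers having $O(\log n)$-bit exponents and $\Theta(\log(n/\eps))$-bit mantissas, rounding toward zero so the error is one-sided, and sum the last row. The paper's Lemma~\ref{lemma:DAGs1} proves a tighter per-entry error bound of $(1-2^{1-t})^{3n^2}$ by tracking the dependency depth, whereas you use the coarser total-operation-count bound $T = O(n^3)$, but since $\log(n^3/\eps) = O(\log(n/\eps))$ both analyses give the same asymptotic mantissa length and the claimed running time.
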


\begin{theorem}
\begin{sloppypar}
For any $n \geq 1$, and for every $0 < \eps \leq 1$, we can generate at random a labeled DAG, essential DAG, or extensional DAG, $D$ on $n$ vertices with probability $\approx{p}(D)$ such that $1-\eps \leq \approx{p}(D)a(n) \leq 1+\eps$, $1-\eps \leq \approx{p}(D)d(n) \leq 1+\eps$, or $1-\eps \leq \approx{p}(D)e(n) \leq 1+\eps$. This can be done in time $O(n^2 + n\log (n/\eps))$, provided a table of size $O(n^2\log (n/\eps))$ bits, computable in time $O(n^3\log (n/\eps)M(\log (n/\eps)))$, is available.
\end{sloppypar}
\label{thm:fptasDAGs}
\end{theorem}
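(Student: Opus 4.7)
The plan is to instantiate the recursive decomposition sampler on Robinson's decomposition by sources, with the exact counts replaced by the approximate ones stored in the table. First I would observe that the preprocessing of Theorem~\ref{thm:counting-DAGs}, executed verbatim, produces in the same asymptotic time and space the full matrix of approximate counts $\approx{a}(m,k)$ of labeled DAGs on $m$ vertices with exactly $k$ sources for every $1\le k\le m\le n$, together with the associated cumulative sums. Storing these $O(\log(n/\eps))$-bit floating-point numbers takes $O(n^2\log(n/\eps))$ bits in total, which justifies both the table size and the preprocessing time claimed in the statement.

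Next I would describe the generator. At a generic step with $m$ vertices still unassigned, it (i) draws the number $k$ of new sources with probability proportional to $\approx{a}(m,k)$, using the precomputed cumulative distribution; (ii) picks uniformly the identity of these $k$ sources among the $\binom{m}{k}$ possibilities; (iii) draws the number $j$ of sources of the subDAG on the remaining $m-k$ vertices, with probability proportional to $(2^k-1)^j 2^{k(m-k-j)}\approx{a}(m-k,j)$, as dictated by the conditioned form of Robinson's identity; (iv) chooses uniformly one of the $(2^k-1)^j 2^{k(m-k-j)}$ compatible arc patterns from the $k$ new sources; (v) recurses on the $m-k$ remaining vertices with $j$ fixed as their number of sources, stopping when $m=0$. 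Only steps (i) and (iii) involve floating-point approximation; (ii) and (iv) are performed exactly.

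Correctness then follows from a telescoping argument. The probability $\approx{p}(D)$ of a given DAG $D$ factors over the at most $n$ recursion levels as a product of $\approx{a}$-ratios times the exact uniform probabilities of (ii) and (iv). Expressing $a(n)$ with the exact analogue of the recurrence that $\approx{a}$ satisfies and simplifying, the product becomes $1/\approx{a}(n)$ multiplied by a compound factor of the form $\prod_i(1\pm\eta_i)^{\pm 1}$, where each $\eta_i$ is the per-entry relative error delivered by Theorem~\ref{thm:counting-DAGs}. Tuning the mantissa length so that each $\eta_i=O(\eps/n)$ with a small enough hidden constant forces the compound factor into $[1-\eps,1+\eps]$ via $(1+c\eps/n)^{O(n)}\le 1+\eps$, and combining with $\approx{a}(n)\le a(n)$ gives the desired $1-\eps\le\approx{p}(D)a(n)\le 1+\eps$. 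This compounding of small relative errors over $O(n)$ levels, and in particular keeping the upper bound (absent from the one-sided guarantee of Theorem~\ref{thm:counting-DAGs}) inside $1+\eps$, is the main technical obstacle.

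Finally, the running time splits into two budgets. Every bit of the output is written at most once (each ordered vertex pair is handled in the unique level in which its tail enters $S$), so the combinatorial choices in (ii) and (iv) together cost $O(n^2)$ across the whole recursion. The arithmetic work on approximate counts reduces to $O(\log(n/\eps))$ bit operations per level, namely an $O(\log(n/\eps))$-bit random draw plus a bounded number of comparisons on numbers of that length against the precomputed cumulative tables, summing to $O(n\log(n/\eps))$. The two contributions yield the claimed $O(n^2+n\log(n/\eps))$ generation time. The essDAG and extDAG cases follow the same scheme once the analogous decompositions from Appendix~\ref{appendix:DAGs} are plugged in.
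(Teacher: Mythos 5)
Your algorithm differs from the paper's Algorithm~\ref{generation-1} in a consequential way. In the paper's sampler the number of sources of the subDAG on the remaining $m-k$ vertices is never drawn explicitly: it is produced by the unconditioned recursive call, so it is distributed as $\approx{a}(m-k,j)/\sum_\ell\approx{a}(m-k,\ell)$. Your step (iii) instead draws $j$ with probability proportional to $(2^k-1)^j\,2^{k(m-k-j)}\,\approx{a}(m-k,j)$, which is the correct conditional distribution implied by Robinson's recurrence. These two distributions are not the same, and the difference is not cosmetic: if one instantiates the paper's Algorithm~\ref{generation-1} with \emph{exact} counts on $n=3$, the DAG with arcs $1\to 2$, $1\to 3$, $2\to 3$ is produced with probability $\tfrac{15}{25}\cdot\tfrac13\cdot\tfrac23\cdot\tfrac12\cdot\tfrac12=\tfrac{1}{30}\neq\tfrac{1}{25}$, and correspondingly the paper's telescoping identity $1/a(n)=\prod_i a(N_i,f_i)/a(N_i)$ fails ($\tfrac{15}{25}\cdot\tfrac{2}{3}\cdot 1=\tfrac{2}{5}$). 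Your variant does satisfy $p(D)=1/a(n)$ under exact arithmetic, so your decomposition is on firmer ground than the paper's, and your error estimate in the third paragraph is correct in outline and matches what Lemma~\ref{lemma:DAGs1} gives once you observe that the combinatorial factors in steps (ii) and (iv) cancel in $\approx{p}(D)/p(D)$, leaving a product of $d\le n$ ratios of $\approx{a}$-expressions, each within $(1-2^{1-t})^{\pm O(n^2)}$ of $1$.

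There is, however, a real gap in your time and space analysis. The claimed budgets -- a table of $O(n^2\log(n/\eps))$ bits and per-level arithmetic work of $O(\log(n/\eps))$ -- are sized for a sampler that only ever queries the per-row cumulative sums of $\approx{a}(m,\cdot)$. Your step~(iii) samples from a distribution whose weights $(2^k-1)^j\,2^{k(m-k-j)}\,\approx{a}(m-k,j)$, and therefore whose cumulative sums, depend jointly on $m$ and $k$. Precomputing a patricia trie for every pair $(m,k)$ costs $\Theta(n^3)$ entries of $O(\log(n/\eps))$ bits each, which overshoots the claimed $O(n^2\log(n/\eps))$-bit table. Computing the prefix sums on the fly at each level instead costs $\Theta(m)$ floating-point multiplications, so $\Theta(n^2)$ multiplications on $O(\log(n/\eps))$-bit numbers in total, which overshoots the claimed $O(n^2+n\log(n/\eps))$ generation time. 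You need to either show how these cumulative tables can be compressed or amortized, or state weaker bounds; as written, ``a bounded number of comparisons against the precomputed cumulative tables'' tacitly assumes a $O(n^2)$-entry cumulative structure that step~(iii) does not actually have.
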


Notice how, since $a(n)$, $d(n)$ and $e(n)$ are less than $2^{n^2}$, then $\varepsilon = 1/2^{n^2}$ implies full precision, and, in the case of Thm.~\ref{thm:fptasDAGs}, we get the same running times as in Thm~\ref{thm:exactDAGs}.


\subsection{Counting a combinatorial ensemble}

To illustrate the floating-point approximation scheme for a combinatorial ensemble, we choose the well-known problem of counting 0/1 Knapsack solution. We are given a set of $n$ nonnegative integer weights $w_1,\dots,w_n$ and an integer $C$ and are asked how many subsets of elements of $w_1,\dots,w_n$ sum up to at most $C$. Since this problem is \#P-complete, research has focused on approximation algorithms. The first one was a randomized subexponential time algorithm~\cite{DBLP:journals/cpc/DyerFKKPV93} based on near-uniform sampling of feasible solutions by a random walk. A rapidly mixing Markov chain appeared in~\cite{DBLP:journals/siamcomp/MorrisS04}, which provided the first Fully Polynomial Time Randomized Approximation Scheme (FPRAS), for this problem, that had remained open for a some time. An FPRAS with a complexity $O(n^3 + n^2\eps^{-2})$ was given in~\cite{DBLP:conf/stoc/Dyer03}, by combining dynamic programming and rejection sampling. This complexity bound can be improved to $O(n^{2.5}\sqrt{\log(\eps^{-1})} + n^2\eps^{-2})$ by a more sophisticated approach using randomized rounding~\cite{DBLP:conf/stoc/Dyer03}. Recently,~\cite{Stefankovic:2012:DPA:2339941.2339944,DBLP:conf/focs/GopalanKMSVV11} gave the first deterministic FPTAS for this problem, running in time $O(n^3\eps^{-1}\log(n/\eps))$. A weaker result, namely a version of the algorithm of~\cite{Stefankovic:2012:DPA:2339941.2339944}, but in which the number of arithmetic operations depends on $\log C$, appeared in~\cite{DBLP:journals/corr/abs-1008-3187} and in the combined extended abstract~\cite{DBLP:conf/focs/GopalanKMSVV11}.

The solution in~\cite{Stefankovic:2012:DPA:2339941.2339944} is based on a function $\tau(i,a)$ defined as the smallest capacity $c$ such that there exist at least $a$ solutions to the 0/1 Knapsack problem with weights $\{w_1,\dots,w_i\}$ and capacity $c$. The second parameter of $\tau$ is then approximated, and $\tau(i,a)$ is computed by a dynamic programming algorithm. 

We start from the classic pseudo-polynomial dynamic programming algorithm obtained from the recurrence 
\[s(i,c) = s(i-1,c) + s(i-1,c-w_{i}),\]
where $s(i,c)$ is the number of 0/1 Knapsack solutions that use a subset of the items $\{w_1,\dots, w_i\}$, and their weights sum up to at most $c \leq C$. We approximate the values of $s$ using floating-point numbers, which leads to a more direct FPTAS, with a much simpler proof, and an easily implementable algorithm. Making the same assumption as~\cite{Stefankovic:2012:DPA:2339941.2339944} that additions on $O(\log C)$-bit numbers take unit time, we improve~\cite{Stefankovic:2012:DPA:2339941.2339944,DBLP:conf/focs/GopalanKMSVV11} as follows:

\begin{theorem}
For every $n \geq 1$, and every $0 < \eps \leq 1$, for an input $\{w_1,\dots,w_n\}$, $C$ to the 0/1 Knapsack counting problem, we can compute a floating-point number $Z$ of $2\log n + \log(1/\eps) + 1$ bits, which satisfies $(1-\eps)s(n,C) \leq Z \leq s(n,C)$,
in time $O(n^3\eps^{-1}\lceil\log (1/\eps)/\log n\rceil)$, assuming unit cost additions and comparisons on numbers with $O(\log C)$ bits.
\label{thm:knapsack-main}
\end{theorem}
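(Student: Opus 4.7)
The plan is to run the classical pseudopolynomial DP for $s(i,c) = s(i-1,c) + s(i-1,c-w_i)$ in floating-point arithmetic, and then to collapse each row of the table into a compact ``staircase'' representation that removes the dependence on $C$.

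First, fix a floating-point format with $m := \log n + \log(1/\eps) + 1$ mantissa bits and $\lceil \log(n+1) \rceil$ exponent bits (since $s(i,c) \le 2^n$), matching the claimed $2\log n + \log(1/\eps) + 1$ total bits. Let $\fl{x}$ denote rounding $x$ down to this grid, so $(1 - 2^{-m})\, x \le \fl{x} \le x$, and define $\underline{s}(0,c) = 1$ for $c \ge 0$ and $0$ otherwise, and $\underline{s}(i,c) := \fl{\underline{s}(i-1,c) + \underline{s}(i-1,c-w_i)}$. A straightforward induction on $i$, combined with Bernoulli's inequality, gives $(1 - 2^{-m})^i\, s(i,c) \le \underline{s}(i,c) \le s(i,c)$; for $i=n$, the choice of $m$ yields $(1-2^{-m})^n \ge 1 - n \cdot 2^{-m} \ge 1 - \eps/2 \ge 1 - \eps$, so $Z := \underline{s}(n,C)$ meets the required two-sided bound.

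Next, since $c \mapsto \underline{s}(i,c)$ is nondecreasing, piecewise constant, and takes values in a floating-point grid of size $O((n+1)\cdot 2^m) = O(n^2/\eps)$, I represent row $i$ as a sorted list $L_i$ of at most $O(n^2/\eps)$ pairs $(c_j, v_j)$, where $\underline{s}(i,c) = v_j$ for every $c \in [c_j, c_{j+1})$. To build $L_i$ from $L_{i-1}$, observe that the breakpoints of $c \mapsto \underline{s}(i-1,c) + \underline{s}(i-1,c-w_i)$ are contained in the merged sorted union $\{c_j\} \cup \{c_j + w_i\}$. A single linear-time merge, together with a two-pointer walk through $L_{i-1}$ to evaluate both summands at each candidate breakpoint, followed by applying $\fl{\cdot}$ and coalescing consecutive pairs with equal rounded value, produces $L_i$ in $O(|L_{i-1}|)$ arithmetic operations.

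Summing across the $n$ levels yields $O(n\cdot n^2/\eps) = O(n^3/\eps)$ arithmetic operations overall. Under the model's unit-cost arithmetic on $O(\log C)$-bit numbers (taking $\log C \ge \log n$, else the underlying DP is already polynomial), each $m$-bit mantissa addition costs $O(\lceil m/\log n\rceil) = O(\lceil \log(1/\eps)/\log n\rceil)$ machine words, giving the stated total running time. The main obstacle is the staircase bookkeeping: one must argue that after each rounding step, coalescing consecutive equal values really does cap $|L_i|$ at the floating-point grid size $O(n^2/\eps)$ uniformly in $C$, since that is precisely what eliminates the pseudopolynomial dependence on $C$ and distinguishes this algorithm from the naive DP.
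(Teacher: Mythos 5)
Your proposal is correct and uses the same algorithm and data structure as the paper: truncated floating-point arithmetic with $\Theta(\log n)$ exponent bits and $\Theta(\log(n/\eps))$ mantissa bits, a monotone staircase list of (capacity, approximate count) pairs per DP row, construction of row $i$ from row $i-1$ by merging the breakpoint sets $\{c_j\}$ and $\{c_j+w_i\}$, evaluating at each merged breakpoint, coalescing equal consecutive values, and bounding each list's length by the $O(n^2/\eps)$ size of the floating-point grid.

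Where you diverge from the paper is only in how the approximation quality is argued, and your framing is actually cleaner. The paper defines $\approx{s}(i,c)$ \emph{from} the pruned list (as a successor query) and then, in its Lemma~\ref{lemma:list(i)-approx}, must prove by a somewhat delicate contradiction argument (using bimonotonicity and the pruning rule) that this list-derived quantity still satisfies the recurrence $\approx{s}(i,c)=\approx{s}(i-1,c)\oplus\approx{s}(i-1,c-w_i)$ for every $c$, not only for capacities stored in the list. You instead define $\underline{s}(i,\cdot)$ pointwise by the rounded recurrence, so the error induction is immediate, and then separately observe that the step function $\underline{s}(i,\cdot)$ has all its breakpoints in $\{c_j\}\cup\{c_j+w_i\}$, so evaluating there and coalescing represents it exactly; the tricky pruning lemma is dissolved into an obvious statement. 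One small slip: truncation to $m$ mantissa bits gives relative error $2^{1-m}$, not $2^{-m}$ (the leading mantissa bit is $1$, so $x\ge 2^{p-1}$ while the dropped tail is $<2^{p-m}$); with your choice $m=\log n+\log(1/\eps)+1$ this still yields $2^{1-m}=\eps/n$ and $(1-\eps/n)^n\ge 1-\eps$, matching the paper's choice $t=1+\log(n/\eps)$, so the conclusion is unaffected.
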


\noindent Note that if $\eps^{-1} = \Omega(n)$, our deterministic FPTAS also improves both FPRASes in~\cite{DBLP:conf/stoc/Dyer03}. 

Our reasoning is along the following lines. Since the numbers of solutions can be at most $2^n$, and the values of the dynamic programming are obtained by sequences of $O(n)$ successive additions, we can approximate them using floating-point numbers with $\log n$ bits for the exponent and $\log (n/\eps) + 1$ bits for the mantissa. In order to obtain the $1 - \eps$ approximation factor, we will show that the relative error of each approximation of $s(i,c)$ is $(1-\eps/n)^i$, for any $i \leq n$. To keep the table small, we exploit the fact that the number of different entries in each row $i$ of the approximated table is at most $2^{\log n + \log (n/\eps)+1} = O(n^2/\eps)$. 

Recently, the problem of counting 0/1 Knapsack solutions has been extended to a DAG, as follows~\cite{WAOA13}. Given a DAG with nonnegative arc weights, two vertices $s$ and $t$, and a capacity $C$, count how many paths exist from $s$ to $t$ of total weight at most $C$; this problem is relevant for various applications in biological sequence analysis, see the references in~\cite{WAOA13}. This is clearly a generalization of counting 0/1 Knapsack solutions, since given an instance $\{w_1,\dots,w_n\},C$ it suffices to construct the DAG having $\{v_0,\dots,v_n\}$ as vertex set, $s = v_0$, $t=v_n$, and for each $i \inset{1}{n}$, there are two parallel arcs from $v_{i-1}$ to $v_i$, with weights 0 and $w_i$, respectively. 

\begin{sloppypar}
In~\cite{WAOA13}, the technique of~\cite{Stefankovic:2012:DPA:2339941.2339944} is extended to this problem, and an FPTAS running in time $O(mn^3\log n\eps^{-1}\log(n/\eps))$ is obtained (inaccurately, the factor $\log(n/\eps)$ is missing from their stated complexity bound). Just as we do for the classical 0/1 Knapsack problem, we start from the basic pseudo-polynomial dynamic programming algorithm extended to a DAG, whose values we approximate using floating-point numbers. We show that we can organize the computation in sequences of $O(n\log(\frac{m}{n}))$ successive additions, so that we need floating-point numbers with only $\log (n\log(\frac{m}{n})/\eps)$ bits for the mantissa, and $\log n$ bits for the exponent. This analogously leads to a faster and simpler FPTAS.
\end{sloppypar}

\begin{theorem}
\begin{sloppypar}
For every $n \geq 1$, and every $0 < \eps \leq 1$, for an input DAG on $n$ vertices and $m$ arcs, nonnegative arc weights, and a capacity $C$, we can compute an $1-\eps$ approximation of the number of $s,t$-paths, in time $O(mn^2\log(\frac{m}{n})\eps^{-1}\lceil\log(1/\eps)/\log n\rceil)$, assuming unit cost additions and comparisons on numbers with $O(\log C)$ bits.
\end{sloppypar}
\label{thm:kDAG-main}
\end{theorem}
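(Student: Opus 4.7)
The plan is to adapt the strategy behind Thm.~\ref{thm:knapsack-main} to the DAG setting, with one new ingredient needed to handle summations over in-neighborhoods of arbitrary size. Define $f(v, c)$ as the number of $s,v$-paths of total weight at most $c$, satisfying $f(v, c) = \sum_{(u,v) \in A} f(u, c - w(u, v))$ with base case $f(s, c) = 1$ for $c \geq 0$; the target is $f(t, C)$. A flat aggregation of $f(v, \cdot)$ over its $d_v$ in-neighbors would yield addition chains of length up to $n\cdot \max_v d_v$, which is too deep for the desired precision budget. Instead, I would evaluate each sum via a balanced binary merge tree over the in-neighbors, so that every addend contributes through only $O(\log d_v)$ additions. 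By concavity of $\log$ (Jensen's inequality with $\sum_v d_v = m$), one has $\sum_v \log d_v \leq n\log(m/n)$, so every addition chain terminating at $\approx{f}(t, C)$ has depth $O(n\log(m/n))$.

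Choose a floating-point format with an exact exponent of $O(\log n)$ bits (since all values are bounded by $2^n$) and a mantissa of $k := \lceil\log(n\log(m/n)/\eps)\rceil + 1$ bits. As in Thm.~\ref{thm:knapsack-main}, the truncated addition satisfies $\approx{x} \oplus \approx{y} \leq x + y$ and $\approx{x} \oplus \approx{y} \geq (1 - 2^{-k})(x+y)$, so an addition chain of depth $D$ incurs a one-sided multiplicative loss of at least $(1 - 2^{-k})^D \geq (1 - \eps/(n\log(m/n)))^D$. With $D = O(n\log(m/n))$ this is $\geq 1 - \eps$ by Bernoulli, yielding $(1-\eps)\,f(t,C) \leq \approx{f}(t, C) \leq f(t, C)$.

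For the running time, $\approx{f}(v, \cdot)$ is a non-decreasing step function taking at most $L = O(n^2\log(m/n)/\eps)$ distinct floating-point values (from $O(\log n)$ exponent bits and $O(\log(n\log(m/n)/\eps))$ mantissa bits), hence storable as a sorted list of at most $L$ breakpoints. At vertex $v$, summing the $d_v$ shifted breakpoint lists via the binary merge tree uses $d_v - 1$ two-way merges, each costing $O(L)$ pointwise floating-point additions (and each addition costs $O(\lceil\log(1/\eps)/\log n\rceil)$ under the assumed $O(\log C)$-bit arithmetic model). Summing over $v$ with $\sum_v d_v = m$ yields the announced $O(mn^2\log(m/n)\,\eps^{-1}\lceil\log(1/\eps)/\log n\rceil)$ bound.

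The main obstacle I expect is the careful bookkeeping for the compact list representation: one must verify that shifting a breakpoint list by an arc weight $w(u,v)$ and then truncation-adding it to another list cannot push the distinct-value count above the $L$ budget (this follows because the output still lies in the same floating-point format, so adjacent breakpoints whose truncated values coincide collapse), and that the merge tree at each vertex is executed bottom-up so the per-vertex work stays $O(d_v L)$ rather than quadratic. Once those details are nailed down, the remaining error and runtime arithmetic proceeds exactly as in the proof of Thm.~\ref{thm:knapsack-main}, with $n\log(m/n)$ replacing $n$ as the effective summation depth.
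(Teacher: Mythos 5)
Your proposal is essentially the paper's own proof: the paper also replaces each high in-degree vertex by a balanced binary tree (producing a DAG $D'$ with in-degree $\leq 2$ and $O(m)$ vertices), invokes the same $\sum_v \log d_v \leq n\log(m/n)$ bound to cap the addition-chain depth, uses the same mantissa length and the same compact bimonotonic list of at most $O(n^2\log(m/n)/\eps)$ pairs, and charges $O(L)$ work per merge over $O(m)$ merges. Your only slip is cosmetic: the per-addition loss is $1-2^{1-k}$ rather than $1-2^{-k}$, which your extra $+1$ in $k$ already absorbs.
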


\section{Approximation by floating-point numbers}

Throughout this paper, we assume that the problem instances consist of $n$ objects (DAGs with $n$ vertices, 0/1 Knapsack instances with $n$ objects). Let $c \geq 1$ be such that the maximum numerical value of a particular counting problem is $2^{n^c}-1$ (that is, it can be represented with $n^c$ bits). Any number $x \in \{0,\dots,2^{n^c}-1\}$ can be written as 
\[x = x_{1} 2^{p-1} + x_{2} 2^{p-2} + \dots + x_{p-1} 2^1 + x_{p} 2^0 = 2^p\left(x_{1} 2^{-1} + x_{2} 2^{-2} + \dots + x_p 2^{-p}\right),\]
where $1 \leq p \leq n^c$, $x_1 = 1$, and $x_i \in \{0,1\}$, for $i \in \{2,\dots,p\}$. Under floating-point arithmetic terminology, $p$ is called the \emph{exponent} of $x$, and the binary string $x_1x_2\dots x_p$ is called its \emph{mantissa}.

We will approximate $x$ as a floating-point number which has $c \log n$ bits dedicated to store its exponent $p$ exactly, but only $t$ bits dedicated to store the first $t$ bits of its mantissa; that is, we approximate $x$ by the number
\[\fl{x}_{c\log n,t} := 2^p\left(x_{1} 2^{-1} + x_{2} 2^{-2} + \dots + x_t 2^{-t}\right).\]
We will often drop the subscript $c\log n,t$ when this will be clear from the context. For sure, we will choose $t \geq c\log n$, since the contrary cannot help. 

For every $0 \leq x < 2^{n^c}$, it holds that 
\begin{equation}
(1-2^{1-t})x \leq \fl{x}_{c\log n,t} \leq x.
\label{eq-1}
\end{equation}

Let $\approx{x}$ and $\approx{y}$ be two floating-point numbers with $c\log n$ bits for the exponent and $t$ bits for the mantissa. We denote the sum $\fl{\approx{x} + \approx{y}}$ by $\approx{x} \oplus \approx{y}$, and the product $\fl{\approx{x}\approx{y}}$ by $\approx{x} \otimes \approx{y}$. We assume that we can compute $\approx{x} \oplus \approx{y}$ with a bit complexity of $O(c \log n + t) = O(t)$; if additions on $O(\log n)$-bit numbers take unit time, then we assume we can compute $\approx{x} \oplus \approx{y}$ with a word complexity of $O(t/\log n)$. Let us denote by $M(t)$ the slowdown factor of a multiplication algorithm on $t$-bit numbers; for example, the Sch{\"o}nhage-Strassen algorithm~\cite{multiplication} multiplies two $t$-bit numbers in time $O(t\log t\log\log t)$, and we get $M(t) = \log t\log\log t$. Accordingly, we assume that we can compute $\approx{x} \otimes \approx{y}$ with $O((c \log n + t)M(c \log n + t))$ bit operations.

If $x,y \in \{0,\dots,2^{n^c}-1\}$ are such that $x + y,xy \in \{0,\dots,2^{n^c}-1\}$, and $\approx{x},\approx{y}$ are two floating-point numbers with $c\log n$ bits for the exponent and $t$ bits for the mantissa such that 
\[(1-2^{1-t})^i x \leq \approx{x} \leq x, \text{ and } (1-2^{1-t})^j y \leq \approx{y} \leq y,\]
\noindent for some integers $i,j \geq 0$, then by (\ref{eq-1}) the following inequalities hold

\begin{equation}
(1-2^{1-t})^{1+\max(i,j)}(x+y) \leq \approx{x} \oplus \approx{y} \leq x+y,
\label{eq:fl1}
\end{equation}
\begin{equation}
(1-2^{1-t})^{1+i+j}xy \leq \approx{x} \otimes \approx{y} \leq xy.
\label{eq:fl2}
\end{equation}

For each particular problem, we will choose $t$ as a function of $n$ and of the error factor $\eps$, $0 < \eps \leq 1$. For the problem of random generation of DAGs we have $c = 2$, and we take $t(n,\eps) = 1+ \log(3n^3/\eps)$; in the case of counting 0/1 Knapsack solutions $c = 1$ and $t(n,\eps) = 1 + \log (n/\eps)$, while for its extension on a DAG, $c = 1$ and $t(n,\eps) = 1 + \log (n\log(\frac{m}{n})/\eps)$.

\section{Random generation of DAGs}
\label{sec:uarGASs}

\begin{sloppypar}
In Sec.~\ref{sec:EXACT-DAG} we present the well-known decomposition of labeled DAGs by sources~\cite{R73}, and turn it into a deterministic random generation algorithm. In Sec.~\ref{sec:FPTAS-DAG}, we show how to approximate the numerical values of the counting recurrence, and argue that the resulting random generation algorithm is an FPTAS of lower complexity. 
\end{sloppypar}


\subsection{Exact u.a.r.~generation of DAGs by sources}
\label{sec:EXACT-DAG}

For $1 \leq k \leq n$,
let $a({n,k})$ denote the number of labeled DAGs with $n$ vertices, out of which precisely $k$ are sources. 
Then $a(n):= \sum_{k=1}^n a({n,k})$ is the number of labeled DAGs on $n$ vertices.
In~\cite{R73} a simple decomposition of DAGs by sources delivers the following counting recurrence:

\begin{equation}
a({n,k}) = {n \choose k}\sum_{s=1}^{n-k}(2^k-1)^s 2^{k(n-k-s)} a({n-k,s}),
\label{eq-dags}
\end{equation}

\noindent where $a({n,n}) = 1$, for all $n \geq 1$. Indeed, there are ${n \choose k}$ ways to choose the $k$ sources, and by removing the sources we obtain a DAG with $n-k$ vertices and $s$ sources, $1 \leq s \leq n-k$. Each of these $s$ sources must have a non-empty in-neighborhood included in the set of $k$ removed vertices, while the other $n-k-s$ vertices can have arbitrary in-neighbors among these $k$ vertices.

In order to generate u.a.r.~a DAG having the set $V = \{0,\dots,n-1\}$ as vertex set, recurrence (\ref{eq-dags}) suggests the following recursive algorithm. Choose the number $k$ of its sources with probability $a(n,k)/\sum_{t=1}^na(n,t)$. Then, choose u.a.r.~the $k$ sources $\{v_1,\dots,v_k\}$, and call the recursive algorithm for $V \setminus \{v_1,\dots,v_k\}$. Finally, connect $\{v_1,\dots,v_k\}$ with the graph returned by the recursive call, as indicated by the proof of (\ref{eq-dags}); see Algorithm~\ref{generation-1}. 

\begin{algorithm}[t]
\footnotesize
\caption{\small {\sc randomGenerateDAG}($V,a(\cdot\,,\cdot)$)\protect\\Returns a random DAG on vertex set $V$, dubbed $D$, together with the set of its sources.\protect\\ The table of values $a(\cdot\,,\cdot)$ is either computed exactly or approximately, according to recurrence (\ref{eq-dags}).\label{generation-1}}

	$n := |V|$\;

	\textbf{if} $n = 0$ \textbf{then} \textbf{return} $((\emptyset,\emptyset),\emptyset)$\;
	
	choose $k \in \{1,\dots,n\}$ with probability $a(n,k)/\sum_{t=1}^{n}a(n,t)$\;
	choose u.a.r.~a $k$-subset $\{v_1,\dots,v_k\} \subseteq V$\;
	$(D,S) := \textsc{randomGenerateDAG}(V \setminus \{v_1,\dots,v_k\},a(\cdot\,,\cdot))$\;
	
	\BlankLine
	
	$X :=V(D) \setminus S$; $V(D) := V(D) \cup \{v_1,\dots,v_k\}$\;
	\ForEach{$s \in S$}{
		$N^-(s) :=$ a non-empty subset of $\{v_1,\dots,v_k\}$ chosen u.a.r.\;
	}

	\ForEach{$x \in X$}{
		$N^-(x) := N^-(x) \; \cup$ a subset of $\{v_1,\dots,v_k\}$ chosen u.a.r.\;
	}
	
	\textbf{return} $(D,\{v_1,\dots,v_k\})$.

\end{algorithm}

In order to choose a number $k \in \{1,\dots,n\}$ with probability $a(n,k)/\sum_{t=1}^{n}a(n,t)$, we can choose u.a.r.~$r \in \{1,\dots,\sum_{j=1}^{n}a({n,j})\}$, and then take $k$ as the smallest integer such that $r \leq \sum_{j=1}^{k}a({n,j})$. For every $i \in \{1,\dots,n\}$, we can store a patricia trie containing values $\sum_{t=1}^{j}a(i,t)$, for all $j \in \{1,\dots,i\}$; $r$ is found by a successor query in the patricia trie. 

The asymptotic behavior of $a(n)$ is $a(n) \sim n!2^{n \choose 2}/(Mp^n),$ where $M = 0.474$ and $p = 1.448...$~\cite{B86,B88}. Therefore, we need $O(n^2)$ bits to store each $a(n,k)$. In order to compute numbers $a(n,k)$, we assume to have access to pre-computed tables storing numerical values of binomial coefficients, and of all $(2^k-1)^s$; number $2^{k(n-k-s)}$ can be computed by setting one bit to 1. Each number $a(n,k)$ can be then computed with $O(n)$ additions and multiplications on $n^2$ bits. Therefore, computing the entire table $a(n,k)$ has bit complexity $O(n^5M(n))$.

For every $i \inset{1}{n}$, the $i$th patricia trie can be constructed with $O(n^3)$ bit operations, uses space $O(n^3 + n\log n)$ bits, and supports successor queries in time $O(n^2)$; these are standard considerations in data structures. Therefore, choosing $k$ takes time $O(n^2)$. The second part of the algorithm takes overall $O(n^2)$ time, since each of the $O(n^2)$ arcs of a DAG is introduced at most once. Therefore, we obtain Thm.~\ref{thm:exactDAGs}.


\subsection{An FPTAS for generating labeled DAGs u.a.r.}
\label{sec:FPTAS-DAG}

Let $\eps$, $0 < \eps \leq 1$, be fixed. Instead of using $n^2$ bits for storing each entry in the table $a(n,k)$, we use floating-point representations with $2\log n$ bits for the exponent and $t(n,\eps) = 1 + \log(3n^3/\eps)$ bits for the mantissa.

For each $1 \leq k \leq n$, we approximate $a(n,k)$ by $\approx{a}({n,k})$, recursively computed by floating-point additions and multiplications, as:

\begin{equation}
\approx{a}({n,k}) = \fl{{n \choose k}} \otimes \bigoplus_{s = 1}^{n-k}\left(\fl{(2^k - 1)^s}\otimes\fl{2^{k(n-k-s)}}\otimes \approx{a}({n-k,s})\right),
\label{eq:DAGsApprox}
\end{equation}

\noindent where $\approx{a}(k,k) = a(k,k) = 1$, for all $k \leq n$. In order to compute numbers $\approx{a}(n,k)$, we assume to have access to tables now storing floating-point approximations with $2\log n$ bits for the exponent and $1 + \log(3n^3/\eps)$ bits for the mantissa, with a precision as in (\ref{eq-1}), of binomial coefficients and of numbers $(2^k - 1)^s$. These floating-point numbers can be obtained from the tables storing their exact values, assumed available in the exact case, by trivially setting the exponent to be the length of the exact number, and by filling in its mantissa by taking the first $1 + \log(3n^3/\eps)$ bits. Number $2^{k(n-k-s)}$ can be represented exactly with the floating-point representation by setting the exponent to $k(n-k-s) + 1$, the first bit of the mantissa to 1, and the remaining bits to 0.

%
%
%
%
%

%
%
%
%
%
%
%
%
%
%

Each number $\approx{a}(n,k)$ can be computed with $O(n)$ floating-point additions and multiplications on $O(\log(n/\eps))$-bit numbers; thus, the entire table $\approx{a}(n,k)$ can be computed in time $O(n^3\log(n/\eps)M(\log (n/\eps)))$. 
%
%
%

The following lemma characterizes the approximation quality of the numbers $\approx{a}(n,k)$.

\begin{lemma}
For any $n \geq 1$ and any $1 \leq k \leq n$, it holds that
\[\left(1-{2^{1-t(n,\eps)}}\right)^{3n^2}a({n,k}) \leq \approx{a}({n,k}) \leq a({n,k}).\]
\label{lemma:DAGs1}

\end{lemma}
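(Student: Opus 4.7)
The plan is to proceed by strong induction on $n$, and for each computation track how many factors of $(1-2^{1-t(n,\eps)})$ accumulate by applying inequalities (\ref{eq-1}), (\ref{eq:fl1}), (\ref{eq:fl2}) in sequence along the evaluation of (\ref{eq:DAGsApprox}). The upper bound $\approx{a}(n,k) \leq a(n,k)$ is immediate since each of $\fl{\cdot}$, $\oplus$, $\otimes$ never overestimates, so all work lies in the lower bound.

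For the base case, whenever $k=n$ (including $n=1$) we have $\approx{a}(n,n)=a(n,n)=1$, which satisfies the claimed bound trivially. For the inductive step with $k<n$, I would fix a summand indexed by $s$ in (\ref{eq:DAGsApprox}) and count its error exponent: by (\ref{eq-1}), $\fl{(2^k-1)^s}$ carries exponent $1$ and $\fl{2^{k(n-k-s)}}$ is exact (exponent $0$); by the inductive hypothesis, $\approx{a}(n-k,s)$ carries exponent $3(n-k)^2$. Applying (\ref{eq:fl2}) twice to multiply these three operands gives a summand with exponent at most $3(n-k)^2 + 1 + 1 + 1 = 3(n-k)^2 + 3$.

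Next I would add the $n-k$ summands with $\oplus$. Since (\ref{eq:fl1}) replaces $\max(i,j)$ by $1+\max(i,j)$, performing $n-k-1$ successive additions of terms each at exponent at most $3(n-k)^2+3$ yields an exponent of at most $3(n-k)^2 + 3 + (n-k-1) = 3(n-k)^2 + (n-k) + 2$. The outer multiplication by $\fl{\binom{n}{k}}$ (exponent $1$) adds another $1+1=2$ via (\ref{eq:fl2}), giving a total exponent of at most $3(n-k)^2 + (n-k) + 4$.

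It remains to verify $3(n-k)^2 + (n-k) + 4 \leq 3n^2$ for all $1 \leq k \leq n-1$ with $n \geq 2$; this is equivalent to $(n-k)+4 \leq 3k(2n-k)$, whose right-hand side is minimized at $k=1$ (or $k=n-1$), giving $3(2n-1) \geq 6n-3 \geq n+3 \geq (n-k)+4$ for $n \geq 2$. The main obstacle, and the reason to be careful, is this bookkeeping: one must distinguish which inputs to $\oplus$ and $\otimes$ are exact (so contribute no factor) from those already approximate, and check that the bound $3n^2$ absorbs both the induction and the linear overhead from one outer row of the recurrence. Everything else reduces to routine substitution.
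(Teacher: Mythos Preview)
Your proposal is correct and follows essentially the same approach as the paper: induction on $n$, tracking the accumulated error exponent through (\ref{eq-1}), (\ref{eq:fl1}), (\ref{eq:fl2}) along the evaluation of (\ref{eq:DAGsApprox}), arriving at the same bound $3(n-k)^2+(n-k)+4$ and then checking it is at most $3n^2$. The only cosmetic slip is the parenthetical ``(or $k=n-1$)'': the quadratic $k(2n-k)$ is concave with vertex at $k=n$, so on $[1,n-1]$ its minimum is at $k=1$ alone for $n\geq 3$; this does not affect your argument since you proceed with $k=1$.
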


\begin{proof}
We prove the first inequality; $\approx{a}({n,k}) \leq a({n,k})$ will follow analogously. We reason by induction on $n$, the claim being clear for $n = 1$. For any $1 \leq s \leq n-k$, it holds that
\[\left(1-2^{1-t(n,\eps)}\right)^{3+3(n-k)^2}(2^k-1)^s 2^{k(n-k-s)} a({n-k,s}) \leq \fl{(2^k - 1)^s}\otimes\fl{2^{k(n-k-s)}}\otimes \approx{a}({n-k,s}),\]

\begin{sloppypar}
\noindent since $\fl{2^{k(n-k-s)}} = 2^{k(n-k-s)}$, by (\ref{eq:fl2}) it holds that $\left(1-2^{1-t(n,\eps)}\right)^3 (2^k - 1)^s2^{k(n-k-s)}\leq \fl{(2^k - 1)^s}\otimes\fl{2^{k(n-k-s)}}$, and from the inductive hypothesis we have $\left(1-{2^{1-t(n,\eps)}}\right)^{3(n-k)^2}a({n-k,s}) \leq \approx{a}({n-k,s})$.
\end{sloppypar}
Since the sum goes over $s$ from 1 to $n-k$, we have to do $n-k-1$ floating-point additions, therefore, by (\ref{eq:fl1}),
\vspace{-.3cm}
\begin{equation*}
\begin{split}
\left(1-2^{1-t(n,\eps)}\right)^{3+3(n-k)^2 + n-k-1}\sum_{s=1}^{n-k}(2^k-1)^s 2^{k(n-k-s)} a({n-k,s}) \leq \\
\leq \bigoplus_{s=1}^{n-k}\fl{(2^k - 1)^s}\otimes\fl{2^{k(n-k-s)}}\otimes \approx{a}({n-k,s}).
\end{split}
\end{equation*}
\vspace{-.3cm}

We assumed that $(1-2^{1-t(n,\eps)}){n \choose k} \leq \fl{{n \choose k}}$, therefore this implies, by (\ref{eq:fl2}), that

\[\left(1-2^{1-t(n,\eps)}\right)^{3(n-k)^2 + n-k+4}a(n,k) \leq \approx{a}(n,k).\] 

Since $k \geq 1$ and $n \geq 2$, we have $3(n-k)^2 + n-k + 4 \leq 3(n-1)^2 + (n-1) + 4 = 3n^2 - 5n + 6 \leq 3n^2$, which proves the claim, because $1-2^{1-t(n,\eps)} < 1$.
\end{proof}

Lemma~\ref{lemma:DAGs1} immediately implies an FPTAS for counting labeled DAGs, as stated by Thm.~\ref{thm:counting-DAGs}. For completing the proof of Thm.~\ref{thm:counting-DAGs}, take $t(n,\eps) = 1 + \log(3n^2/\eps)$ and use relation~(\ref{eq:epsfptas}) below.

\bigskip

Notice that the table of numbers $\approx{a}(\cdot\,,\cdot)$ depends only on $n$ and $\eps$. We propose to run Algorithm~\ref{generation-1} on the table $\approx{a}(\cdot\,,\cdot)$. We use the same scheme as before for choosing $k$, which now takes time $O(\log (n/\eps))$. This is our FPTAS for approximate random generation.

\begin{proofSpec}
Let $D$ be a fixed DAG with $n$ vertices and assume that $F_1$, $f_1 := |F_1|$, is the set of sources of $D$, $F_2$, $f_2 := |F_2|$, is the set of sources of $D \setminus F_1$, and so on, until say $F_d$, with $f_d := |F_d|$. The probability of generating $D$ u.a.r., which is $1/a(n)$, can also be expressed, as a consequence of Algorithm~\ref{generation-1}, as
\begin{equation*}
\begin{split}
\frac{1}{a(n)} & = \frac{a({n,f_1})}{a({n,1}) + \dots + a({n,n})} \cdot \frac{a({n-f_1,f_2})}{a({n-f_1,1}) + \dots + a({n-f_1,n-f_1})} \cdot \cdots \\
& \cdot \frac{a({n-f_1-\cdots - f_{d-1},f_d})}{a({n-f_1-\cdots - f_{d-1},1}) + \dots + a({n-f_1-\cdots - f_{d-1},n-f_1-\cdots - f_{d-1}})} \\
& = \prod_{i=1}^d \frac{a({n - \sum_{j=1}^{i-1}f_j,f_i})}{\sum_{\ell = 1}^{n - \sum_{j=1}^{i-1}f_j} a({n - \sum_{j=1}^{i-1}f_j,\ell})}
\end{split}
\end{equation*}

The probability $\approx{p}(D)$ that {\sc randomGenerateDAG}$(V,\approx{a}(\cdot\,,\cdot)) = (D,F_1)$ is 

\[\prod_{i=1}^d \frac{\approx{a}({n - \sum_{j=1}^{i-1}f_j,f_i})}{\sum_{\ell = 1}^{n - \sum_{j=1}^{i-1}f_j} \approx{a}({n - \sum_{j=1}^{i-1}f_j,\ell}))}.\]

Therefore, by Lemma~\ref{lemma:DAGs1}, since $d \leq n$, $(1-2^{1-t(n,\eps)})^{3n^3} \leq a(n)\approx{p}(D) \leq (1-2^{1-t(n,\eps)})^{-3n^3}$ holds. If we choose $t(n,\eps) = 1 + \log(3n^3/\eps)$, it holds that
\vspace{-.2cm}
\[\left(1 - \frac{\eps}{3n^{3}}\right)^{3n^3} \leq a(n)\approx{p}(D) \leq \left(1 - \frac{\eps}{3n^{3}}\right)^{-3n^3}.\] 

By standard techniques, for all natural numbers $n \geq 1$ and all $0 < \eps \leq 1$, the following hold:
\begin{equation}
1-\eps \leq \left(1 - \frac{\eps}{n}\right)^{n}, \text{ and } \left(1 - \frac{\eps}{n}\right)^{-n} \leq 1+\eps.
\label{eq:epsfptas}
\end{equation}
\end{proofSpec}

\section{Counting 0/1 Knapsack solutions}

The classic pseudo-polynomial algorithm for counting 0/1 Knapsack solutions defines $s(i,c)$ as the number of Knapsack solutions that use a subset of the items $\{1,\dots,i\}$, of weight at most $c \inset{0}{C}$, and computes these values $s(i,c)$ by dynamic programming, using the recurrence
\begin{equation}
s(i,c) = s(i-1,c) + s(i-1,c-w_i).
\label{eq:k1}
\end{equation}
Indeed, we either use only a subset of items from $\{1,\dots,i-1\}$ whose weights sum up to $c$, or use item $i$ of weight $w_i$ and a subset of items from $\{1,\dots,i-1\}$ whose weights sum up to $c - w_i$. This DP algorithm executes $nC$ additions on $n$-bit numbers and its complexity is $O(C\,n^2)$.
When $C \leq n$, this is $O(n^3)$, whence $n \leq C$ will be assumed in the following.
We will assume, like in~\cite{Stefankovic:2012:DPA:2339941.2339944}, that additions and comparisons on numbers with $O(\log C)$ bits have unit cost,
which implies the same on $O(\log n)$-bit numbers. 

We also use relation (\ref{eq:k1}) to count, but our numbers, for any $0 < \eps \leq 1$, are approximate floating-point numbers with $\log n$ bits for the exponent, and $1 + \log (n/\eps)$ bits for the mantissa (we can assume for simplicity that a solution using all $n$ objects has cost greater than $C$, so that $s(i,c) < 2^n$ for all $i \inset{1}{n}$, $c \inset{0}{C}$). By the above assumption, we have that additions and comparisons of these floating-point numbers on $O(\log (n/\eps))$ bits take time $O(\lceil\log (n/\eps)/\log n\rceil) = O(\lceil\log(1/\eps)/\log n\rceil)$. 


\begin{algorithm}[t]
\footnotesize
\caption{\small {\sc ApproximatelyCountKnapsackSolutions}($w_1,\dots,w_n,C$) \protect\\
An FPTAS for counting 0/1 Knapsack solutions\label{alt:k2}}
\textbf{Notation:} $\approx{s}(i,c) := \max\{t \,:\, [c',t]\in list(i), \,c' \leq c \}$\;

\BlankLine

	\BlankLine
	insert the pair $[0,1]$ into $list(0)$\;
	\BlankLine
	
	\For{$i=1 \mathbf{\ to\ } n$}{
		construct the bimotonotic $list'(i)$ containing, for each $[c,t]$ in $list(i-1)$, the two pairs:\\
		\smallskip
		\hspace{.5cm} $\bullet$~~$[c,t \oplus \approx{s}(i-1,c-w_i)]$\;
		\hspace{.5cm} $\bullet$~~$[c + w_i,\approx{s}(i-1,c+w_i) \oplus t]$\;
		\smallskip
		obtain $list(i)$ by scanning $list'(i)$ and dropping a pair if the previous one has the same second component\;
		}	
	\textbf{return} $\approx{s}(n,C)$.

\end{algorithm}

For every $i \inset{0}{n}$ we keep a list, $list(i)$, whose entries are pairs of the form $[c,t]$, where $c$ is a capacity in $\{0,\dots,C\}$ and $t$ is an approximate floating-point number of solutions. We will refer to the set of first components of the pairs in $list(i)$ as the \emph{capacities in $list(i)$}.

Having $list(i)$, for every $c \inset{0}{C}$ we define $\approx{s}(i,c) := \max\{t \,:\, [c',t]\in list(i), \,c' \leq c \}$,
where the maximum of an empty set is taken to be $0$.

The first list, $list(0)$, consists of the single pair $[0,1]$. After this initialization,
while computing $list(i)$ from $list(i-1)$,
we maintain the following two invariant properties:

\begin{itemize}
\item[\textsf{(I$_{1}$)}] $list(i)$ is strictly increasing on both components;
\item[\textsf{(I$_{2}$)}] $\left(1-\eps/n\right)^i s(i,c) \leq \approx{s}(i,c) \leq s(i,c)$, for every $c \inset{0}{C}$.
\end{itemize}

Note that Property~\textsf{(I$_{1}$)} implies that the length of $list(i)$ is at most the total number of floating-point numbers that can be represented with $\log n + \log (n/\eps) + 1$ bits, that is $O(n^2/\eps)$.\\

We obtain $list(i)$ by first building the bimonotonic list $list'(i)$ which, for every capacity $c$ in $list(i-1)$, contains the following two pairs:
\begin{equation}
[c,\approx{s}(i-1,c) \oplus \approx{s}(i-1,c-w_i)] \text{ and } [c + w_i,\approx{s}(i-1,c+w_i) \oplus \approx{s}(i-1,c)].
\label{eq:kk1}
\end{equation}

It may turn out that $list'(i)$ contains distinct pairs having the same second component. Therefore,
in order to assure Property~\textsf{(I$_{1}$)}, we obtain $list(i)$ by pruning away from $list'(i)$ those pairs $[c_2,t]$ when another pair $[c_1,t]$ with $c_1 < c_2$ is present. We summarize this procedure as Algorithm~\ref{alt:k2}. Lemma~\ref{lemma:list'} below shows that we can efficiently construct $list'(i)$; the idea of the proof is to do two linear scans of $list(i)$, each with two pointers, and it is in Appendix~\ref{appendix:proof-list'}.

\begin{lemma}
We can compute $list'(i)$ and $list(i)$ from $list(i-1)$ in time $O(n^2\eps^{-1}\lceil\log(1/\eps)/\log n\rceil)$.
\label{lemma:list'}
\end{lemma}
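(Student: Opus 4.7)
The plan is to exploit the bimonotonicity of $list(i-1)$ (guaranteed by invariant \textsf{(I$_1$)}) so that every $\approx{s}(i-1,\cdot)$ value needed to build $list'(i)$ can be retrieved by a single sweep rather than a search. Recall that $\approx{s}(i-1,c')$ is just the second component of the largest-capacity pair of $list(i-1)$ with first component $\leq c'$. Since in building $list'(i)$ we query $\approx{s}(i-1,\cdot)$ at the capacities $c-w_i$ and $c+w_i$ as $c$ ranges in increasing order over the first components of $list(i-1)$, both queried capacities also increase monotonically. Therefore one forward-moving pointer into $list(i-1)$ per query type suffices, giving all needed $\approx{s}(i-1,\cdot)$ values in $O(|list(i-1)|)$ floating-point comparisons.

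Next, I would produce $list'(i)$ by generating, in two passes over $list(i-1)$, two sublists: $L_1$ containing the pairs $[c,\, t \oplus \approx{s}(i-1,c-w_i)]$ and $L_2$ containing the pairs $[c+w_i,\, \approx{s}(i-1,c+w_i) \oplus t]$, and then merging them by first component. Each of $L_1, L_2$ is sorted on the first component by construction, and its second component is non-decreasing because $\approx{s}(i-1,\cdot)$ is non-decreasing and $\oplus$ is monotone in both arguments.

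The main verification step is that the merged list is bimonotonic. Between two consecutive pairs both coming from $L_1$ (or both from $L_2$) this is immediate. For consecutive pairs $[c,t] \in L_1$ and $[c',t'] \in L_2$ with $c \leq c'$, one writes $c' = c'' + w_i$ for some $c''$ appearing in $list(i-1)$, and observes that $c \leq c'' + w_i$ (giving $\approx{s}(i-1,c) \leq \approx{s}(i-1,c''+w_i)$) and $c-w_i \leq c''$ (giving $\approx{s}(i-1,c-w_i) \leq \approx{s}(i-1,c'')$), so $t \leq t'$ by monotonicity of $\oplus$. The symmetric case is analogous. This is the only non-routine step; everything else is bookkeeping. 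Obtaining $list(i)$ from $list'(i)$ is then a single linear scan that drops any pair whose second component equals that of its predecessor.

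Finally, the cost follows from aggregation. By \textsf{(I$_1$)}, $|list(i-1)| = O(n^2/\eps)$, so each phase performs $O(n^2/\eps)$ floating-point $\oplus$'s and comparisons, plus the same number of pointer advances and capacity additions. Each $\oplus$ or comparison on $\log n + \log(n/\eps) + 1$-bit floating-point numbers costs $O(\lceil \log(1/\eps)/\log n\rceil)$ under the model assumption, and capacity additions/comparisons on $O(\log C)$-bit numbers are unit cost. Multiplying gives the claimed $O(n^2 \eps^{-1} \lceil \log(1/\eps)/\log n\rceil)$ bound.
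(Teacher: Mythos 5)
Your proof is correct and takes essentially the same route as the paper's: split $list'(i)$ into two sublists (the paper calls them $back(i)$ and $forw(i)$), build each with a single forward-moving pointer into $list(i-1)$ exploiting bimonotonicity, merge by first component, and prune with a linear scan, yielding the claimed $O(n^2\eps^{-1}\lceil\log(1/\eps)/\log n\rceil)$ bound. The only difference is that you explicitly verify bimonotonicity of the merged list by the cross-sublist comparison, a step the paper asserts without spelling it out; that is a sound and worthwhile elaboration rather than a deviation.
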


\begin{lemma}
Property~\textsf{(I$_{2}$)} holds for $list(i)$, that is, for every $i \inset{0}{n}$ and every $c \inset{0}{C}$, $\left(1-\eps/n\right)^i s(i,c) \leq \approx{s}(i,c) \leq s(i,c)$ holds.
\label{lemma:list(i)-approx}
\end{lemma}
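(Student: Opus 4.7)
My plan is to prove both inequalities by induction on $i$. The base case $i=0$ is immediate from $list(0) = \{[0,1]\}$, which gives $\approx{s}(0,c) = 1 = s(0,c)$ for every $c \in \{0,\dots,C\}$. For the inductive step, I would treat the two bounds separately.

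The upper bound is short. By the pruning rule, any pair $[c',t] \in list(i)$ is the earliest-capacity occurrence of its second component in $list'(i)$, so by the construction (\ref{eq:kk1}) it was generated either as a type-1 pair with $c'$ a capacity of $list(i-1)$, or as a type-2 pair with $c' - w_i$ a capacity of $list(i-1)$. In either case the second component equals $\approx{s}(i-1,c') \oplus \approx{s}(i-1,c'-w_i)$. Applying the inductive hypothesis and the upper side of (\ref{eq:fl1}) yields $t \leq s(i-1,c') + s(i-1,c'-w_i) = s(i,c')$, and if $c' \leq c$ this is at most $s(i,c)$, so $\approx{s}(i,c) \leq s(i,c)$.

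The lower bound is the delicate part, because $c$ and $c-w_i$ need not be capacities of $list(i-1)$, and the values $\approx{s}(i-1,c)$ and $\approx{s}(i-1,c-w_i)$ may be inherited from distinct entries. Let $c^*$ and $c^{**}$ be the largest capacities of $list(i-1)$ with $c^* \leq c$ and $c^{**} \leq c - w_i$, so that by definition $\approx{s}(i-1,c) = \approx{s}(i-1,c^*)$ and $\approx{s}(i-1,c-w_i) = \approx{s}(i-1,c^{**})$. I would then split on whether $c^{**} + w_i \geq c^*$ or $c^{**} + w_i < c^*$, and show in each case that one of the pairs produced by (\ref{eq:kk1}) has capacity $\leq c$ and second component exactly $\approx{s}(i-1,c) \oplus \approx{s}(i-1,c-w_i)$. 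In the first case I would use the type-2 pair at $c^{**} + w_i$, invoking the squeeze $c^* \leq c^{**}+w_i \leq c$ to get $\approx{s}(i-1,c^{**}+w_i) = \approx{s}(i-1,c^*)$. In the second case I would use the type-1 pair at $c^*$, together with the observation that $c^{**} \leq c^* - w_i \leq c - w_i$, which gives $\approx{s}(i-1,c^* - w_i) = \approx{s}(i-1,c^{**})$ by maximality of $c^{**}$. Whether or not the chosen pair survives pruning, an earlier pair of $list(i)$ with the same second component and capacity $\leq c$ must be present, hence $\approx{s}(i,c) \geq \approx{s}(i-1,c) \oplus \approx{s}(i-1,c-w_i)$. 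The degenerate case $c < w_i$ is handled in the obvious way by taking $\approx{s}(i-1,c-w_i) = 0 = s(i-1,c-w_i)$ and using only the type-1 contribution at $c^*$.

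Finally, the inductive hypothesis and the lower side of (\ref{eq:fl1}), together with the choice $t(n,\eps) = 1 + \log(n/\eps)$ (so that $1 - 2^{1-t(n,\eps)} = 1 - \eps/n$), give $\approx{s}(i-1,c) \oplus \approx{s}(i-1,c-w_i) \geq (1-\eps/n)^i\,s(i,c)$, closing the induction. I expect the main obstacle to be exactly the lower-bound case analysis: because $list(i-1)$ stores only a pruned subset of capacities, one must carefully track which values of $\approx{s}(i-1,\cdot)$ collapse to the same stored quantity, so that the two branches of the recurrence $s(i,c) = s(i-1,c) + s(i-1,c-w_i)$ are simultaneously witnessed by a single surviving entry of $list(i)$; once this is in place the error bookkeeping is an immediate application of the bounds already set up in Section 2.
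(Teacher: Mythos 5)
Your proof is correct and reaches the same conclusion, but it is organized somewhat differently from the paper's argument. The paper proves the stronger exact identity $\approx{s}(i,c) = \approx{s}(i-1,c) \oplus \approx{s}(i-1,c-w_i)$ by starting from the winning pair $[c_1,t_1]$ that realizes $\approx{s}(i,c)$ and then ruling out, via a contradiction that invokes bimonotonicity of $list'(i)$ and Property~\textsf{(I$_{1}$)}, the existence of any capacity $c_2$ in $list(i-1)$ strictly between $c_1-w_i$ and $c-w_i$. Both bounds of Property~\textsf{(I$_{2}$)} then fall out of that single identity. You instead treat the two inequalities separately: the upper bound follows directly from the observation that every pair of $list(i)$ has second component $\approx{s}(i-1,c')\oplus\approx{s}(i-1,c'-w_i)$ for its own first component $c'\le c$, and for the lower bound you only establish the one-sided inequality $\approx{s}(i,c)\ge\approx{s}(i-1,c)\oplus\approx{s}(i-1,c-w_i)$ by exhibiting a surviving witness pair, via a two-way case split on whether $c^{**}+w_i\ge c^*$ or not. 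Your case analysis is a constructive, contradiction-free rendering of the same combinatorial fact (that the relevant collapsed values of $\approx{s}(i-1,\cdot)$ are simultaneously witnessed in $list'(i)$ at some capacity $\le c$), and is arguably a touch easier to follow; the paper's version buys the slightly cleaner single recurrence (\ref{eq:kk3}), which it reuses verbatim in the DAG-generalized proof in Appendix~C. Both are sound, and your handling of the degenerate $c<w_i$ case and of the pruning step (noting that pruning only moves a given second component to a smaller first component, so the witness is not lost) is exactly what is needed.
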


\vspace{-.1cm}

\begin{proof}
The claim is clear for $i = 0$. For an arbitrary capacity $c \inset{0}{C}$, let $[c_1,t_1]$ in $list(i)$ be such that $\approx{s}(i,c) = t_1$. From the definition of $\approx{s}$, we get $\approx{s}(i,c) = \approx{s}(i,c_1)$; from the fact that the pairs in $list(i)$ are of the form (\ref{eq:kk1}), we have 
\begin{equation}
\approx{s}(i,c) = \approx{s}(i,c_1) = \approx{s}(i-1,c_1) \oplus \approx{s}(i-1,c_1 - w_{i}).
\label{eq:kk2}
\end{equation}
Since the capacities in $list(i-1)$ are a subset of the capacities in $list'(i)$, and the fact that we have pruned the pairs in $list'(i)$ by keeping the smallest capacity for every approximate number of solutions corresponding to that capacity, it holds that $\approx{s}(i-1,c_1) = \approx{s}(i-1,c)$.
Moreover, observe that there is no capacity $c_2$ in $list(i-1)$ such that $c_1-w_{i} < c_2 < c - w_{i}$. Indeed, for assuming the contrary, $c_2 + w_{i}$ would be a capacity in $list'(i)$, by~(\ref{eq:kk1}). Since we have chosen $c_1$ as the largest capacity in $list(i)$ smaller than $c$, and $c_1 < c_2 + w_{i} < c$ holds, this implies that $c_2 + w_{i}$ was pruned when passing from $list'(i)$ to $list(i)$; thus, the two pairs of $list'(i)$ having $c_1$ and $c_2 + w_{i}$ as first components have equal second components. By (\ref{eq:kk2}) and the bimonotonicity of $list(i-1)$, this entails that also the two pairs of $list(i-1)$ having $c_1-w_{i}$ and $c_2$ as first components must have equal second components. This contradicts the fact that $list(i-1)$ satisfies Property \textsf{(I$_{1}$)}.

Therefore, it also holds that $\approx{s}(i-1,c_1 - w_{i}) = \approx{s}(i-1,c - w_{i})$.
Plugging these two relations into~(\ref{eq:kk2}) we obtain 
\begin{equation}
\approx{s}(i,c) = \approx{s}(i-1,c) \oplus \approx{s}(i-1,c - w_{i}).
\label{eq:kk3}
\end{equation}
From~(\ref{eq:k1}), the fact that Property~\textsf{(I$_{2}$)} holds for $list(i-1)$, and from~(\ref{eq:fl1}), we get that $(1-\eps/n)^{i}s(i,c) \leq \approx{s}(i,c) \leq s(i,c)$, which shows that Property~\textsf{(I$_{2}$)} holds also for $list(i)$.
\end{proof}

From Lemma~\ref{lemma:list(i)-approx}, the fact that Property~\textsf{(I$_{2}$)} holds, and (\ref{eq:epsfptas}), we finally obtain Thm.~\ref{thm:knapsack-main}. Since $n^2\eps^{-2} = \Omega(n^3\eps^{-1}\lceil \log(1/\eps)/\log n\rceil)$ when $\eps^{-1} = \Omega(n)$, our deterministic FPTAS also runs faster than the Monte Carlo FPRASes in~\cite{DBLP:conf/stoc/Dyer03}, which currently held the record on the whole range, as soon as $\eps = o(1/n)$.

\bigskip

Onwards, we briefly sketch the details on applying this method to the Knapsack problem on a DAG (the full explanation is available in Appendix~\ref{appendix:kDAG}.). We can assume that all vertices of the DAG $D$ (with $n$ vertices and $m$ arcs) are reachable from $s$, and all vertices reach $t$. For simplicity, we transform $D$ into an equivalent DAG $D'$ in which every vertex has at most two in-coming arcs, and $D'$ has $O(m)$ vertices and arcs, and the maximum path length (i.e., number of arcs in the path) is $O(n\log(\frac{m}{n}))$. Say that $D'$ has $n'$ vertices and let $s = v_1,v_2,\dots,v_{n'} = t$ be a topological ordering of them. We now denote by $s(i,c)$ the number of paths that end in $v_i$ and their total weight is at most $c \inset{0}{C}$. If for every node $v_i$, its in-degree is $d(i)$, its in-neighborhood is $\{v_{i_1},v_{i_{d(i)}}\}$, and the weights of the arcs entering $v_i$ are $w_{i_1},w_{i_{d(i)}}$, respectively, relation (\ref{eq:k1}) generalizes to:

\begin{equation}
s(i,c) = 
\begin{cases}
s(i_1,c-w_{i_1}), & \text{if $d(i) = 1$,} \\
s(i_1,c-w_{i_1}) + s(i_2,c-w_{i_2}), & \text{if $d(i) = 2$.}
\end{cases}
\label{eqn:kDAG2-m}
\end{equation}

The solution is obtained as $s(n',C)$. As before, we use (\ref{eqn:kDAG2-m}) to count, keeping at each step approximate floating-point numbers. These numbers still have $\log n$ bits for the exponent, but, since the maximum path length in $D'$ is $O(n\log(\frac{m}{n}))$, the length of their mantissa will be $1 + \log(n\log(\frac{m}{n})/\eps)$ bits. Additions and comparisons of these floating-point numbers still take the same time as before, namely $O(\lceil\log(1/\eps)/\log n\rceil)$.

As before, for every $i \inset{1}{n'}$, we keep a list, $list(i)$, of pairs [capacity, approximate number of solutions], now of length at most $O(n^2\log(\frac{m}{n})\eps^{-1})$. Analogously, $list(1)$ consists of the single pair $[0,1]$, and while computing $list(i)$ from lists $list(i_1)$, or $list(i_1)$ and $list(i_2)$ (doable now in time $O(n^2\log(\frac{m}{n})\eps^{-1}\lceil\log(1/\eps)/\log n\rceil)$), we maintain the following two invariants, where $\ell(i)$ denotes the length of the longest path from $s$ to $v_i$:

\begin{itemize}
\item[\textsf{(I$_{1}$)}] $list(i)$ is strictly increasing on both components;
\item[\textsf{(I$_{2}$)}] $\left(1-\eps/(n\log(\frac{m}{n}))\right)^{\ell(i)} s(i,c) \leq \approx{s}(i,c) \leq s(i,c)$, for every $c \inset{0}{C}$.
\end{itemize}

From these considerations, Thm.~\ref{thm:kDAG-main} immediately follows.

%

\newpage

\section*{Acknowledgements}

This work was partially supported by the Academy of Finland under grant 250345 (CoECGR), and by the European Science Foundation, activity ``Games for Design and Verification''. We thank Djamal Belazzougui and Daniel Valenzuela for discussions on data structures for large numbers, and Stephan Wagner for remarks on decomposable structures.

\bibliographystyle{siam}
\bibliography{generation}

\newpage
\appendix

\section{Random generation of other DAG subclasses}
\label{appendix:DAGs}

\subsection{Essential DAGs}

Essential DAGs are used to represent the structure of Bayesian networks~\cite{Gillispie01enumeratingmarkov,DBLP:journals/jmlr/Pena07}. They were counted in~\cite{DBLP:journals/dm/Steinsky03} by inclusion-exclusion, and their asymptotic behavior was studied in~\cite{DBLP:journals/gc/Steinsky04}. We give a new counting recurrence for essDAGs, which leads to the first algorithm for generating u.a.r.~a labeled essDAG with $n$ vertices; this is useful for learning the structure of a Bayesian network from data~\cite{Gillispie01enumeratingmarkov,DBLP:journals/jmlr/Pena07}. This can be turned into an FPTAS, with the same complexity and approximation bounds as in the case of DAGs.

Essential DAGs (essDAGs) are those DAGs with the property that for every edge $(u,v)$, the set of in-neighbors of $u$ is different from the set of in-neighbors of $v$, minus vertex $u$; that is, for every $(u,v) \in E(D)$ it holds that $N^-(u) \neq N^-(v) \setminus \{u\}$.

Define the \emph{depth} of a vertex $x$ in a DAG $D$ as the length of any longest directed path from a source of $D$ to $x$. Note that a vertex of maximum depth in $D$ must be a sink of $D$ (but the converse does not hold). Let us denote by $d(n,k)$ the number of labeled essDAGs with $n$ vertices, and in which there are $k$ vertices of maximum depth. 

%
%
%
%
%
%
%

\begin{lemma}
For any $n \geq 1$ and any $1 \leq k \leq n$, the following recurrence relation holds, where $d(n,n) = 1$, for all $n \geq 1$, 

\[d(n,k) = {n \choose k}\sum_{s = 1}^{n-k} d(n-k,s)\left(s(2^{n-k-s} - 1) + (2^s - s -1)2^{n-k-s}\right)^k.\]

\label{lemma:essDAGs}
\end{lemma}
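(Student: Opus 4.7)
The plan is to decompose any labeled essDAG $D$ on $n$ vertices having $k$ vertices of maximum depth by peeling off the set $M$ of those $k$ vertices. First I would observe that every max-depth vertex is a sink (otherwise a successor would have strictly greater depth), so $M$ consists of $k$ sinks and hence lies in nobody's in-neighborhood. Setting $D' := D - M$, one gets $N^-_{D'}(u) = N^-_D(u)$ for every $u \in V(D')$; this yields two consequences: the depth of each $u \in V(D')$ in $D'$ equals its depth in $D$, so if $D$ has max depth $L$ then $D'$ has max depth $L-1$, and its number of max-depth vertices is $s := |\{u \in V(D') : \mathrm{depth}_D(u) = L-1\}|$; and the essential property of $D$ restricts to an essential property of $D'$. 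This identifies $D$ with the data $(M, D', \{N^-_D(v)\}_{v \in M})$, contributing the $\binom{n}{k}\, d(n-k,s)$ factor once summed over $1 \le s \le n-k$.

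Next I would count, for a fixed $D'$ with $s$ max-depth vertices forming $S \subseteq V(D')$ and $T := V(D') \setminus S$ of size $n-k-s$, the valid choices of $N^-(v) \subseteq V(D')$ for each $v \in M$. The choices across $M$ are independent (new arcs go only \emph{into} $M$, so in-neighborhoods inside $D'$ are untouched, and no arcs exist among the sinks of $M$), hence the attachment count is $f^k$ for a single-vertex count $f$. For $v$ to have depth exactly $L+1$ one needs $N^-(v) \cap S \neq \emptyset$; essentiality must only be checked at the new arcs $(u,v)$, reading $N^-_{D'}(u) \neq N^-(v) \setminus \{u\}$ for each $u \in N^-(v)$.

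The pivotal structural observation is that $N^-_{D'}(u) \subseteq T$ for every $u \in V(D')$, since an in-neighbor of $u$ has depth strictly less than $\mathrm{depth}_{D'}(u) \le L$. Splitting on $m := |N^-(v) \cap S| \ge 1$: when $m \ge 2$, no violation can arise, because for any $u \in N^-(v)$ the set $N^-(v)\setminus\{u\}$ still meets $S$ while $N^-_{D'}(u) \subseteq T$; this contributes $(2^s - s - 1)\cdot 2^{n-k-s}$ in-neighborhoods (non-singleton nonempty subsets of $S$ times arbitrary subsets of $T$). When $m = 1$, writing $N^-(v) = \{u_0\} \cup T'$ with $u_0 \in S$ and $T' \subseteq T$, the only candidate violation is at $(u_0,v)$ (for $u \in T'$ the set $N^-_{D'}(u) \subseteq T$ cannot equal $\{u_0\} \cup (T'\setminus\{u\})$ since $u_0 \in S$), and that violation occurs precisely when $T' = N^-_{D'}(u_0)$, removing exactly one of the $2^{n-k-s}$ choices of $T'$; summed over the $s$ choices of $u_0$ this gives $s(2^{n-k-s}-1)$. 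Adding the two cases yields the factor $s(2^{n-k-s}-1) + (2^s-s-1)\,2^{n-k-s}$, and raising to the $k$-th power produces the claimed recurrence.

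The main obstacle is the case analysis above, particularly the uniform ``exactly one forbidden $T'$'' count: a priori one might fear the count depends on whether $u_0$ has in-neighbors in $D'$, but the observation $N^-_{D'}(u_0) \subseteq T$ collapses both situations, as the unique forbidden $T'$ is $N^-_{D'}(u_0)$ in either case. Verifying the reverse direction of the bijection (that every triple $(M,D',\{N^-(v)\}_{v \in M})$ satisfying the constraints indeed reconstructs a legitimate essDAG with exactly $k$ max-depth vertices) is immediate from how the constraints on $N^-(v)$ were imposed, and the base case $d(n,n) = 1$ is the empty DAG.
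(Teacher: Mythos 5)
Your proof is correct and follows essentially the same decomposition as the paper: peel off the $k$ max-depth vertices (all sinks), observe that removal preserves depths and in-neighborhoods so that essentiality only needs re-checking on the new arcs, and count in-neighborhoods of each peeled vertex independently by splitting on whether it has one or at least two in-neighbors among the max-depth vertices of $D'$. Your write-up is more detailed than the paper's terse sketch — in particular you explicitly record the key fact $N^-_{D'}(u)\subseteq T$ and verify that it makes the ``exactly one forbidden $T'$'' count uniform — but the argument is the same. (Minor slip: where you write ``depth exactly $L+1$'' you presumably mean depth $L$ in $D$, consistent with your earlier convention that $D$ has max depth $L$ and $D'$ has max depth $L-1$; this does not affect the argument.)
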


\begin{proof}
There are ${n \choose k}$ ways to choose the $k$ vertices of maximum depth, and by removing them we obtain an essDAG with $n-k$ vertices and $s$ vertices of maximum depth, for some $1 \leq s \leq n - k$. Each vertex $x$ of maximum depth must have an in-neighbor among these $s$ vertices. We distinguish two cases. 

First, $x$ has precisely one in-neighbor $y$ among these $s$ vertices, in which case any subset of the remaining $n-k-s$ vertices, except for the in-neighborhood of $y$, can act as in-neighborhood of $x$, restricted to these $n-k-s$ vertices; thus, there are $s(2^{n-k-s} - 1)$ ways of choosing the in-neighborhood of $x$. Second, $x$ has at least two neighbors among the $s$ vertices, in which case any subset of the remaining $n-k-s$ vertices can act as in-neighborhood of $x$, restricted to these $n-k-s$ vertices; this is true since no vertex among the $n-k$ vertices can have an in-neighbor among the $s$ vertices of maximum depth; thus, there are $(2^s - s -1)2^{n-k-s}$ ways of choosing the in-neighborhood of $x$.
\end{proof}

Thanks to the proof of Lemma~\ref{lemma:essDAGs}, in order to generate u.a.r.~an essDAG having $V = \{0,\dots,n-1\}$ as vertex set, proceed recursively as in the case of DAGs. Choose its number $k$ of vertices of maximum depth proportional to $d(n,k)/\sum_{t=1}^{n}d(n,k)$. Then, choose u.a.r.~the $k$ vertices of maximum depth $\{v_1,\dots,v_k\}$, and call the recursive algorithm for $V \setminus \{v_1,\dots,v_k\}$, which returns an essDAG $D$. Finally, for each $i \inset{1}{k}$, choose $b \in \{0,1\}$ at random such that
\begin{equation}		
b = \begin{cases}
		0, & \text{with probability $s(2^{n-k-s} - 1) / \left(s(2^{n-k-s} - 1) + (2^s - s -1)2^{n-k-s}\right)$;}\\
		1, & \text{with complementary probability.
		}
		\end{cases}
\label{eq:essDAGs1}
\end{equation}
If $b=0$, then choose $x$ u.a.r.~among the vertices of maximum depth of $D$, and choose u.a.r.~a subset $W$, different from $N^-(x)$, of the other vertices of $D$, and set $N^-(v_i) = \{x\} \cup W$. The set $W$ can be chosen in time $O(n)$, as we can encode $N^-(x)$ as an $n$-bit number $f$ having a 1 on bit $i$ iff vertex $i$ belongs to $N^-(x)$, for each $i \inset{0}{n-1}$; we then generate u.a.r.~a number $w \inset{0}{2^n-2}$. If $f \leq w$, we set $w = w + 1$. The set $W$ is such that vertex $i \in W$ iff the $i$th bit of $w$ is 1, for all $i \inset{0}{n-1}$.

Otherwise, if $b=1$, choose u.a.r.~a subset $W_1$ of at least two elements of the vertices of maximum depth of $D$, and choose u.a.r.~a subset $W_2$ of the other vertices of $D$, and set $N^-(v_i) = W_1 \cup W_2$. Using the same implementation details and as argued in the case of DAGs, the following theorem holds.

\begin{theorem}
A labeled essDAG with $n$ vertices can be generated u.a.r.~in time $O(n^3)$, provided a table of size $O(n^4)$ bits, computable in time $O(n^5M(n))$, is available.
\end{theorem}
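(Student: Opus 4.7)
The plan is to mirror the treatment of labeled DAGs in Section~\ref{sec:EXACT-DAG}, using the recurrence of Lemma~\ref{lemma:essDAGs} as the basis of a recursive sampler analogous to Algorithm~\ref{generation-1}. Correctness of uniform generation follows by induction on $n$: conditioning on the set $F_1$ of vertices of maximum depth, the combinatorial interpretation of the recurrence shows that each labeled essDAG on $n$ vertices is produced with probability $1/d(n)$, where $d(n):=\sum_k d(n,k)$. The bit $b\in\{0,1\}$ of (\ref{eq:essDAGs1}) splits the sampling of each in-neighborhood according to the two summands $s(2^{n-k-s}-1)$ and $(2^s-s-1)2^{n-k-s}$ of Lemma~\ref{lemma:essDAGs}, and the uniform choices of $x$, $W$, $W_1$, $W_2$ realize the corresponding counts.

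For storage and precomputation, I first use the algebraic simplification
\[
s(2^{n-k-s}-1)+(2^s-s-1)\,2^{n-k-s}\;=\;2^{n-k}-2^{n-k-s}-s,
\]
so the $k$-th power in Lemma~\ref{lemma:essDAGs} is $b(m,s)^k$ with base $b(m,s):=2^m-2^{m-s}-s$ depending only on $m=n-k$ and $s$. Since essDAGs form a subclass of DAGs, $d(n,k)\le a(n)<2^{n^2}$, each entry fits in $O(n^2)$ bits, and the $O(n^2)$-entry table occupies $O(n^4)$ bits overall. I would precompute the $O(n^2)$ bases $b(m,s)$ in $O(n^3)$ bit operations, and then obtain all powers $b(m,s)^1,\dots,b(m,s)^n$ incrementally, each step being a multiplication of an $O(n^2)$-bit number by the $O(n)$-bit base at cost $O(n^2M(n))$, so this precomputation costs $O(n^5M(n))$. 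Each $d(n,k)$ then takes $O(n)$ additions and multiplications on $O(n^2)$-bit numbers, yielding the same $O(n^5M(n))$ total table-computation bound as in the DAG case.

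For the generation time, the recursion has at most $n$ levels, since at each call a nonempty set of maximum-depth vertices is removed. At each level I would: sample $k$ using a patricia trie built over the partial sums of the row $d(n,\cdot)$, giving $O(n^2)$-time successor queries as in the DAG proof; sample $F_1$ uniformly among $k$-subsets in $O(n)$; and, for each $v_i\in F_1$, sample its in-neighborhood. In case $b=0$ the bit-mask trick already described handles the constraint $W\neq N^-(x)$ in $O(n)$; in case $b=1$, both $W_1$ (uniform among size-$\ge 2$ subsets of the $s$ top vertices of the smaller essDAG, by rejection or direct ranking) and $W_2$ take $O(n)$. The per-level cost is dominated by the $O(n^2)$ trie query, so the overall generation time is $O(n^3)$.

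The main obstacle I expect is the power precomputation: a straight repeated-squaring would introduce an extra $\log n$ factor over the $O(n^3)$ triples $(m,s,k)$, breaking the $O(n^5M(n))$ bound. The incremental multiplication above---an $O(n^2)$-bit number by an $O(n)$-bit base rather than two $O(n^2)$-bit operands---keeps each update cheap enough, making the algebraic rewriting of the inner factor essential to the complexity rather than a merely cosmetic simplification.
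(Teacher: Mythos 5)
Your proposal mirrors the paper's own argument, which simply adapts the DAG case (Sec.~\ref{sec:EXACT-DAG}) to the recurrence of Lemma~\ref{lemma:essDAGs}, and your more careful accounting of the power precomputation (incremental multiplication of an $O(n^2)$-bit accumulator by an $O(n)$-bit base over the $O(n^3)$ relevant triples) correctly lands on the stated $O(n^5M(n))$ bound that the paper leaves implicit. One small overstatement: the algebraic rewriting $s(2^{n-k-s}-1)+(2^s-s-1)2^{n-k-s}=2^{n-k}-2^{n-k-s}-s$ is tidy but not essential to the complexity, since even in the original two-summand form the base collapses to an $O(n)$-bit integer before exponentiation, so the same $O(n^2M(n))$-per-step incremental argument applies regardless of whether you simplify symbolically first.
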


We state the bounds for the FPTAS for essDAGs together with the one for extensional DAGs at the end of Sec.~\ref{appendix:extDAGs} below.

\subsection{Extensional DAGs}
\label{appendix:extDAGs}

Extensional DAGs (extDAGs) are used in set theory and in some programming languages to model hereditarily finite sets (see, e.g.,~\cite{MT10}). The first recurrence for counting them precedes the one for DAGs~\cite{P62}; their asymptotic behavior was studied recently~\cite{DBLP:conf/analco/Wagner12,Wagner:2012fk}. We gave a random generation algorithm based on ranking and unranking functions~\cite{DBLP:journals/ipl/RizziT13}, which can generate a labeled extDAGs with $n$ vertices in time $O(n^3)$, once an auxiliary table of size $O(n^4)$ bits, computable in time $O(n^5M(n^2))$, is available. We now argue that the decomposition of extDAGs by sources from~\cite{PolicritiT201X-IPL} leads to a similar recursive random generation algorithm; using floating-point numbers this also can be turned into an FPTAS, with the same complexity and approximation bounds as in the case of DAGs.

Extensional DAGs are the ones with the property that their vertices have pairwise different sets of out-neighbors, that is, for all distinct vertices $x,y$, $N^+(x) \neq N^+(y)$ holds. 
Let $e(n,k)$ denote the number of labeled extDAGs with $n$ vertices out of which $k$ are sources. Adapting a recurrence from~\cite{PolicritiT201X-IPL} to labeled extDAGs, we obtain that $e(n,k)$ satisfies the following recurrence relation:
\begin{equation}
e(n,k) = n\left((2^{n-k} - (n-1)) e(n-1,k-1) + \sum_{t = 0}^{n - k - 1} {k + t \choose t+1}2^{n-1 - (k + t)} e(n-1,k + t)\right),
\end{equation}
\noindent where $e(1,1) = 1$, and we interpret $e(n,0)$ as 0, for all $n \ge 2$. Indeed, an extDAG on $n \geq 2$ vertices and $k$ sources is obtained by the addition of a source in two ways. First, a source can be added to an extDAG on $n-1$ vertices and $k-1$ sources. As this source can have as out-neighbors only vertices which are not sources, and it must have its set of out-neighbors different from that of any of the other $n-1$ vertices, there are $(2^{n-k} - (n-1)) e(n-1,k-1)$ ways to add it. Second, a new source can be added to an extDAG on $n-1$ vertices and $k+t$ sources, for $t \inset{0}{n - k - 1}$, by connecting the new source with exactly $t + 1$ existing sources. This new source can have arbitrary arcs toward the remaining $n - 1 - (k + t)$ vertices since its set of out-neighbors will be different from any other of the $n - 1$ vertices. Hence, there are ${s + k \choose k + 1} 2^{n - 1 - (s + k)}e(n-1,s+k)$ ways to add it. 

In order to generate an extDAG with vertex set $V = \{0,\dots,n-1\}$, proceed recursively as before. Choose u.a.r.~a vertex $x$ to be a source, and call the recursive algorithm for $V \setminus \{x\}$, which returns an extDAG $D$. Choose $b \in \{-1,0,1,\dots,n-k-1\}$ at random such that
\begin{equation}		
b = \begin{cases}
		-1, & \text{with probability $n (2^{n-k} - (n-1)) e(n-1,k-1) / e(n,k)$;}\\
		t \in \{0,\dots,n-k-1\}, & \text{with probability $n{k + t \choose t+1}2^{n-1 - (k + t)} e(n-1,k + t) / e(n,k)$.
		}
		\end{cases}
\label{eq:extDAGs1}
\end{equation}
If $b=-1$, then choose u.a.r.~a subset $X$ of vertices of $D$ which are not sources, such that $X$ is different from the out-neighborhood of any other vertex, and set $N^+(x) = X$. This can also be done in time $O(n)$ by generalizing the idea exposed for essDAGs. Keep a patricia trie containing, for all $v \in V(D)$, their binary encodings $f_v$ such that the $i$th bit of $f_v$ is 1 iff vertex $i$ belong to $N^+(v)$. Then generate u.a.r.~$w \inset{0}{2^n-n-k-1}$ and look up in the patricia trie how many binary encodings lexicographically precede, or are equal to $w$, say $p$. Set $w := w + p$, decode the string of bits $w$ to obtain the set $X$. Finally, update the patricia trie by inserting $w$.

Otherwise, if $b \geq 0$, choose u.a.r.~a $(b+1)$-subset $Z$ of the sources of $D$, choose u.a.r.~a subset $X$ of the other vertices of $D$, and set $N^+(x) = X \cup Z$. Using the same implementation details and as argued in the case of DAGs, the following theorem holds.

\begin{theorem}
A labeled extDAG with $n$ vertices can be generated u.a.r.~in time $O(n^3)$, provided a table of size $O(n^4)$ bits, computable in time $O(n^5M(n))$, is available.
\end{theorem}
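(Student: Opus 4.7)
The plan is to mirror the exact-case argument for labeled DAGs in Section~\ref{sec:EXACT-DAG}, now with the recurrence for $e(n,k)$ derived just above. For correctness (uniformity), I would induct on $n$. The base $n=1$ is immediate. For the inductive step, fix a target extDAG $G$ on $V$ with $k$ sources. The algorithm can produce $G$ in exactly $k$ ways, one per choice of an existing source of $G$ as the initial vertex $x$. Given such an $x$, since $x$ has no in-edges in $G$, the graph $G\setminus\{x\}$ is again an extDAG, which the recursive call returns with probability $1/e(n-1)$ by induction. It is routine to check that the two sub-cases ``$x$ has no out-neighbors among the sources of $G\setminus\{x\}$'' and ``$x$ has exactly $t+1$ such out-neighbors'' correspond respectively to the two summands of~(\ref{eq:extDAGs1}); multiplying the four factors (choice of $x$, recursive probability, sampling of $b$, sampling of $Z,X$) and summing over the $k$ admissible choices of $x$ then produces exactly $1/e(n)$ via the defining recurrence for $e(n,k)$.

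For the table-size bound I would use $e(n,k)\le 2^{n^2}$ (since every extDAG is a DAG), so each entry fits in $O(n^2)$ bits and the full $n\times n$ table in $O(n^4)$ bits. To compute the table, I would assume precomputed binomial coefficients, and that the ``simple'' numbers $2^{n-1-(k+t)}$ and $2^{n-k}-(n-1)$ are produced by bit-setting. Each $e(n,k)$ is then a sum of $O(n)$ terms, each a product of $O(1)$ numbers of $O(n^2)$ bits; one such multiplication costs $O(n^2 M(n^2)) = O(n^2 M(n))$ under Sch\"onhage--Strassen, yielding $O(n^3 M(n))$ per entry and $O(n^5 M(n))$ for the whole table.

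For the $O(n^3)$ generation time I would reuse the patricia-trie machinery from Section~\ref{sec:EXACT-DAG}. For every $i\inset{1}{n}$, precompute a patricia trie over the prefix sums of the distribution~(\ref{eq:extDAGs1}) (built in $O(n^3)$ time, $O(n^3+n\log n)$ bits, supporting successor queries in $O(n^2)$). In each of the $n$ recursive calls we then pay $O(1)$ to sample $x$, $O(n^2)$ for one successor query to sample $b$, $O(n)$ to sample $Z$ when $b\ge 0$, and $O(n)$ to sample the out-neighborhood $X$ and write its $O(n)$ arcs. The only non-trivial subroutine is the sampling of $N^+(x)$ in the case $b=-1$: maintaining a second patricia trie that stores the existing out-neighborhoods encoded as $n$-bit integers, we draw $w$ uniformly in $\{0,\dots,2^{n-k}-n\}$, rank-query the trie for the number $p$ of stored encodings lexicographically $\le w$, and output the set encoded by $w+p$, inserting it into the trie; a standard argument shows this yields a uniform sample from the $2^{n-k}-(n-1)$ valid subsets in $O(n)$ time. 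Summing over the $n$ levels gives $O(n^3)$.

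The main obstacle is the clean verification of the $b=-1$ subroutine: one has to argue simultaneously that the shift-by-rank trick produces a uniformly random valid $N^+(x)$, that the patricia trie supports rank and insertion on $n$-bit keys in $O(n)$ time throughout the whole recursion, and that the $O(n)$ updates do not perturb the overall $O(n^3)$ bound. This is the same trick already sketched in the essDAG case, so with its correctness taken as granted, the remainder of the analysis is a direct parallel to the proof of Thm.~\ref{thm:exactDAGs}.
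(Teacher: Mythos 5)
Your proposal mirrors the paper's own treatment, which is essentially a one-sentence pointer back to the proof of Theorem~\ref{thm:exactDAGs} together with the description of the $b=-1$ patricia-trie subroutine given just above the theorem statement; the table-size and table-construction bounds ($O(n^2)$ bits per entry, $O(n)$ products of $O(n^2)$-bit numbers per entry, $O(n^2)$ entries) and the $O(n^2)$-per-level generation cost also match what the paper implicitly relies on. One subtlety you gloss over as ``routine'': unlike the DAG case, where peeling off \emph{all} sources simultaneously gives a \emph{unique} decomposition of the target $G$, here you remove sources one at a time, so a target with $k$ sources is reachable along $k$ distinct recursion paths, and the per-path probability must therefore work out to $1/\bigl(k\,e(n)\bigr)$ rather than $1/e(n)$; this requires a factor of $k$ to appear in the chain (e.g., in the relation between $e(n,k)$ and the $e(n-1,\cdot)$ terms, or equivalently in the sampling distribution for $b$), and it is worth carrying that factor explicitly when writing out the ``four factors'' rather than asserting the cancellation, since a small sanity check (e.g., $e(4,2)$) is the quickest way to make sure the bookkeeping is right.
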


Just as in the case of DAGs, for $0 < \eps \leq 1$, instead of using $n^2$ bits for storing each entry in the tables $d(n,k)$ or $e(n,k)$, we store floating-point approximations with $2\log n$ bits for the exponent and $O(\log(n/\eps))$ bits for the mantissa. We can compute the approximated tables recursively, as done in (\ref{eq:DAGsApprox}). We also have to use approximate floating-point numbers the coefficients involved when choosing $b$ in (\ref{eq:essDAGs1}) and in (\ref{eq:extDAGs1}). Just as in the case of DAGs, the following holds, where $d(n)$ and $e(n)$ denote the number of labeled essDAGs, and labeled extDAGs, respectively, with $n$ vertices.

\begin{theorem}
For any $n \geq 1$, and for every $0 < \eps \leq 1$, a labeled essDAG, or a labeled extDAG, $D$ with $n$ vertices can be generated at random with probability $\approx{p}(D)$ such that 
\[1-\eps \leq \approx{p}(D)d(n) \leq 1+\eps, \text{ or } 1-\eps \leq \approx{p}(D)e(n) \leq 1+\eps, \text{ respectively,}\]
in time $O(n^2 + n\log (n/\eps))$, provided a table of size $O(n^2\log(n/\eps))$ bits, computable in time $O(n^3\log(n/\eps)M(\log (n/\eps)))$, is available.
\end{theorem}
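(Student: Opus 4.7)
The plan is to mirror the proof of Thm.~\ref{thm:fptasDAGs} in Sec.~\ref{sec:FPTAS-DAG}, essentially step for step, replacing the Robinson recurrence (\ref{eq-dags}) by the essDAG recurrence of Lemma~\ref{lemma:essDAGs} and by the extDAG recurrence preceding (\ref{eq:extDAGs1}). First I would define the approximate tables $\approx{d}(n,k)$ and $\approx{e}(n,k)$ by replacing every arithmetic operation in these two recurrences with its floating-point counterpart $\oplus$ and $\otimes$, on numbers with $2\log n$ bits of exponent and mantissa length $t(n,\eps) = 1 + \log(Cn^3/\eps)$ for a sufficiently large constant $C$. The ``closed-form'' coefficients that appear in the recurrences, namely ${n \choose k}$, $2^{k(n-k-s)}$, $2^{n-k} - (n-1)$, ${k+t \choose t+1}$, $2^{n-1-(k+t)}$, as well as the bracketed expression $s(2^{n-k-s}-1)+(2^s-s-1)2^{n-k-s}$ and its $k$-th power, are converted to floating-point numbers of the same precision exactly as described between (\ref{eq:DAGsApprox}) and Lemma~\ref{lemma:DAGs1}; the $k$-th power is handled by repeated squaring with $O(\log n)$ floating-point multiplications.

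Second, I would prove two analogs of Lemma~\ref{lemma:DAGs1}, one for $\approx{d}(n,k)$ and one for $\approx{e}(n,k)$, by induction on $n$ using (\ref{eq:fl1}) and (\ref{eq:fl2}). For essDAGs, one inductive step accumulates: (i) $O(\log n)$ error factors for computing the $k$-th power, (ii) two or three extra factors for the remaining multiplications of the constant coefficients with $\approx{d}(n-k,s)$, (iii) $(n-k-1)$ factors from the chain of additions over $s$, and (iv) one factor from the outer multiplication by $\fl{{n \choose k}}$, plus the inductive bound on $\approx{d}(n-k,s)$. The extDAG case is similar, with an $O(n)$-long sum over $t$ and only constant extra multiplications per summand. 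Writing out the same telescoping as in Lemma~\ref{lemma:DAGs1} shows that the accumulated exponent on $1-2^{1-t(n,\eps)}$ is $O(n^2)$ (the $O(\log n)$ overhead per recursion level merely raises the constant), so that with $t(n,\eps) = 1 + \log(Cn^3/\eps)$ we obtain $(1-\eps/(3n))$-style per-level bounds. I would also observe that the ratios used to sample $b$ in (\ref{eq:essDAGs1}) and (\ref{eq:extDAGs1}) are themselves computed using the same approximate table entries and a constant number of extra floating-point operations, hence introduce only an $O(1)$ extra factor per recursion level, which the constant $C$ can absorb.

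Once the approximation lemma is in hand, the rest is essentially identical to the derivation at the end of Sec.~\ref{sec:FPTAS-DAG}: for a fixed target essDAG or extDAG $D$, the exact probability $1/d(n)$ or $1/e(n)$ expands as a telescoping product of at most $n$ ratios of the form $d(n', k')/\sum_\ell d(n',\ell)$, and the approximate probability $\approx{p}(D)$ is the corresponding product built from $\approx{d}(\cdot,\cdot)$. The approximation lemma bounds each factor individually by $(1 \pm 2^{1-t(n,\eps)})^{O(n^2)}$, whence $(1-2^{1-t(n,\eps)})^{O(n^3)} \leq d(n)\approx{p}(D) \leq (1-2^{1-t(n,\eps)})^{-O(n^3)}$, and (\ref{eq:epsfptas}) finishes the relative-error bound. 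For the complexity bounds, each $\approx{d}(n,k)$ or $\approx{e}(n,k)$ costs $O(n\log n)$ floating-point operations on $O(\log(n/\eps))$-bit numbers, giving $O(n^3\log(n/\eps)M(\log(n/\eps)))$ in total; the table size is $O(n^2\log(n/\eps))$ bits; and the recursive random generation itself performs $n$ patricia-trie searches at cost $O(\log(n/\eps))$ each and introduces each of the $O(n^2)$ arcs once. The main obstacle I anticipate is the careful bookkeeping of exponents in the inductive step, since the $k$-th power and the extra coefficient approximations force one to re-derive the ``$3n^2$'' of Lemma~\ref{lemma:DAGs1} with a slightly larger but still quadratic constant; choosing $C$ large enough then absorbs this in $t(n,\eps) = O(\log(n/\eps))$ and preserves the stated bounds.
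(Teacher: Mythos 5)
Your proposal is correct and follows essentially the same route as the paper, which indeed proves this theorem simply by observing that the DAG argument carries over verbatim to the recurrences for $d(n,k)$ and $e(n,k)$; you have just written out the analogy in full. The one small divergence is how you handle the $k$-th power $\bigl(s(2^{n-k-s}-1)+(2^s-s-1)2^{n-k-s}\bigr)^k$: the paper's scheme (by analogy with $\fl{(2^k-1)^s}$ in~(\ref{eq:DAGsApprox})) pre-computes the exact integer value and truncates it to a floating-point number, incurring a single $(1-2^{1-t})$ error factor, whereas your repeated-squaring approach incurs $O(\log n)$ factors; you correctly note that this extra $O(n\log n)$ total is dominated by the $O(n^2)$ of the Lemma~\ref{lemma:DAGs1}-style bound and can be absorbed into the constant in $t(n,\eps)=1+\log(Cn^3/\eps)$, so both variants yield the stated bounds.
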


\section{Additional proof for counting 0/1 Knapsack solutions}
\label{appendix:proof-list'}

\begin{proofSpecLemma}
At a generic step $i \inset{1}{n}$, we compute $list'(i)$ as follows. We construct two auxiliary lists of pairs $back(i)$ and $forw(i)$. For every capacity $c$ in $list(i-1)$, the list $back(i)$ will contain the pairs $[c,\approx{s}(i-1,c) \oplus \approx{s}(i-1,c-w_{i})]$, and the list $forw(i)$ will contain the pairs $[c + w_{i},\approx{s}(i-1,c+w_{i}) \oplus \approx{s}(i-1,c)]$. List $list'(i)$ is now obtained by merging in a unique sorted list the lists $back(i)$ and $forw(i)$.

In order to compute $forw(i)$, proceed as follows (the computation of $back(i)$ is entirely analogous). Keep two pointers $left$ and $right$ in $list(i-1)$. Pointer $left$ is initially set to the first pair in $list(i-1)$, say $[c,t]$. Pointer $right$ is also set to the first pair in $list(i-1)$, but starts scanning $list(i-1)$ until reaching a pair $[c_1,t_1]$, such that $c_1 +w_i\geq c$ and either $[c_1,t_1]$ is the last pair in $list(i-1)$, or $[c_1,t_1]$ is immediately followed by a pair $[c_2,t_2]$ with the property $c + w_i < c_2$. Append the pair $[c+w_i,t_1 \oplus t]$ at the end of $forw(i)$, and advance pointer $left$ to the next pair in $list(i-1)$; repeat the above procedure, by advancing pointer $right$ to the corresponding pair, and inserting a new resulting pair in $forw(i)$. This is repeated until pointer $left$ reaches the end of $list(i-1)$.

Observe that list $forw(i)$ is bimonotonic, by the fact that Property~\textsf{(I$_{1}$)} holds for $list(i-1)$. By analogy, this is true also for $back(i)$. Therefore, we can merge them and call $list'(i)$ the resulting list. In order to prune the bimonotonic list $list'(i)$ to obtain $list(i)$, we do a linear scan with two pointers, dropping a pair if the previous one has the same second component. Thus Property~\textsf{(I$_{1}$)} holds for $list(i)$. 

Since we assume that additions and comparisons on $O(\log C)$-bit numbers take unit time, that floating-point additions and comparisons take $O(\lceil\log(1/\eps)/\log n\rceil)$ time, and the length of $list(i-1)$ is $O(n^2/\eps)$, the construction of $list(i)$ takes time $O(n^2\eps^{-1}\lceil\log(1/\eps)/\log n\rceil)$.
\end{proofSpecLemma}

\section{Counting Knapsack solutions on a DAG}
\label{appendix:kDAG}

Without loss of generality, we can assume that all vertices of the DAG $D$ (with $n$ vertices and $m$ arcs) are reachable from $s$, and all vertices reach $t$; we also assume that the vertices are labeled in topological order $v_1,\dots,v_n$, such that $s = v_1$ and $t = v_n$. The dynamic programming for the 0/1 Knapsack problem can trivially be extended to a DAG. We denote by $s(i,c)$ the number of $s,v_i$-paths (clearly, these use a subset of the vertices $\{v_1,\dots,v_{i-1}\}$) and of total weight at most $c \inset{0}{C}$. If for every node $v_i$, its in-degree is $d(i)$, its in-neighborhood is $\{v_{i_1},\dots,v_{i_{d(i)}}\}$, and the weights of the arcs from each of these $d(i)$ in-neighbors are $w_{i_1},\dots,w_{i_{d(i)}}$, respectively, relation (\ref{eq:k1}) generalizes to:

\begin{equation}
s(i,c) = \sum_{j=1}^{d(i)}s(i_j,c-w_{i_j}),
\label{eqn:kDAG1}
\end{equation}
where we take $s(1,c) = 1$, for every $c \inset{0}{C}$, and $s(i,c) = 0$ for every $c < 0$ and every $i \inset{1}{n}$. The solution is obtained as $s(n,C)$. Since the number of all $s,v_i$-paths in the DAG is $O(2^i)$, for every $i \inset{1}{n}$, this DP executes $mC$ additions on $n$-bit numbers, and its complexity is $O(Cmn)$. Thus we can assume that $n \leq C$, and as before, that additions on $O(\log C)$-bit, and thus $O(\log n)$-bit numbers, have unit cost.

As in our solution for the 0/1 Knapsack problem, we use the DP to count, keeping at each step approximate floating-point numbers. These numbers still have $\log n$ bits for the exponent, but the length of their mantissa will be chosen to reflect the number of successive floating-point additions necessary to obtain $s(n,C)$. We can organize this computation in sequences of $O(n\log \frac{m}{n})$ repeated additions, and thus we can take the mantissa to be $1 + \log(n\log(\frac{m}{n})/\eps)$ bits long. Accordingly, additions and comparisons of these floating-point numbers still take the same time as before, namely $O(\lceil\log(1/\eps)/\log n\rceil)$.

For clarity, we explain how we organize the computation by transforming the input DAG $D$ into a DAG $D'$ in which every vertex has at most two in-neighbors. For every node $v_i$ of $D$, if $d(i) > 2$, we construct a complete binary tree on top of the in-neighbors $v_{i_1},\dots,v_{i_{d(i)}}$ of $v_i$, where $v_i$ is its root; this tree has $O(d(i))$ vertices and edges, and depth $\log(d(i))$. The vertices and edges of this tree are added to $D$, the arcs from $v_{i_1},\dots,v_{i_{d(i)}}$ to $v_i$ are removed, and all edges of the tree are directed towards $v_i$. Moreover, the weights of the new arcs out-going from $v_{i_1},\dots,v_{i_{d(i)}}$ are set to be the weights of their former arcs towards $v_i$; all other new arcs have weight 0. After transforming the in-neighborhood of all vertices of $D$, the original solutions are in one-to-one correspondence with the solutions of the transformed DAG $D'$. 

The DAG $D'$ has $O(m)$ vertices and $O(m)$ edges. Moreover, since $\sum_{i=1}^n d(i) = m$, the length of a path in $D'$ is at most $\max \sum_{i=1}^n \log d_i = \max \log \prod_{i=1}^n d_i$, where the maximum goes over all partitions of $m$ into $n$ integers $d_1,\dots,d_n$; the maximum is obtained when all factors of the product are $\Theta(m/n)$. Thus, the length of the longest path in $D'$ is $O(n\log(\frac{m}{n}))$.

We denote by $n'$ the number of vertices of $D'$, and we assume that $v_1,\dots,v_{n'}$ is a topological order on $D'$ (so that $s = v_1$ and $t = v_{n'}$). Using the same notation as above, relation (\ref{eqn:kDAG1}) simplifies to 
\begin{equation}
s(i,c) = 
\begin{cases}
s(i_1,c-w_{i_1}), & \text{if $d(i) = 1$,} \\
s(i_1,c-w_{i_1}) + s(i_2,c-w_{i_2}), & \text{if $d(i) = 2$.}
\end{cases}
\label{eqn:kDAG2}
\end{equation}

As in the case of the 0/1 Knapsack problem, for every $i \inset{1}{n'}$, we keep a list $list(i)$ of pairs [capacity, approximate number of solutions], and use the notation $\approx{s}(i,c)$ with the same meaning. Analogously, $list(1)$ consists of the single pair $[0,1]$, and while computing $list(i)$ from lists $list(i_1)$, or from $list(i_1)$ and $list(i_2)$, we maintain the following two invariants, where $\ell(i)$ denotes the length of the longest path from $s$ to $v_i$:

\begin{itemize}
\item[\textsf{(I$_{1}$)}] $list(i)$ is strictly increasing on both components;
\item[\textsf{(I$_{2}$)}] $\left(1-\eps/(n\log(\frac{m}{n}))\right)^{\ell(i)} s(i,c) \leq \approx{s}(i,c) \leq s(i,c)$, for every $c \inset{0}{C}$.
\end{itemize}

Property~\textsf{(I$_{1}$)} implies now that the length of $list(i)$ is $O(n^2\log(\frac{m}{n})\eps^{-1})$. If $d(i) = 1$, then we build $list(i)$ by scanning $list(i_1)$ and for every pair $[c_1,t_1]$, we insert the pair $[c_1+w_{i_1},t_1]$ in $list(i)$. It is obvious that the resulting list satisfies Properties \textsf{(I$_{1}$)} and \textsf{(I$_{2}$)}. 

Therefore, we consider onwards the case $d(i) = 2$. Analogously to (\ref{eq:kk1}), we first build $list'(i)$ that for every capacity $c_1$ in $list(i_1)$, contains the pair
\begin{equation}
[c_1+w_{i_1},\approx{s}(i_1,c_1) \oplus \approx{s}(i_2,c_1 + w_{i_1} - w_{i_2})],
\label{eqn:kDAG3}
\end{equation}
and for every capacity $c_2$ in $list(i_2)$, contains the pair 
\begin{equation}
[c_2+w_{i_2},\approx{s}(i_1,c_2+w_{i_2}-w_{i_1}) \oplus \approx{s}(i_2,c_2)].
\label{eqn:kDAG4}
\end{equation}
We obtain $list(i)$ by scanning $list'(i)$ and dropping a pair if the previous one has the same second component. We next prove an analog of Lemma~\ref{lemma:list(i)-approx}.

\begin{lemma}
Property~\textsf{(I$_{2}$)} holds for $list(i)$, that is, for every $i \inset{1}{n}$ and every $c \inset{0}{C}$, $\left(1-\eps/(n\log(\frac{m}{n}))\right)^{\ell(i)} s(i,c) \leq \approx{s}(i,c) \leq s(i,c)$ holds.
\label{lemma:list'-kDAG-0}
\end{lemma}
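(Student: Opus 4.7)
The plan is to mimic the proof of Lemma~\ref{lemma:list(i)-approx} almost verbatim, induct on the topological position $i$ in $D'$, and handle the two cases $d(i)=1$ and $d(i)=2$ separately. The base case $i=1$ is immediate from $list(1)=\{[0,1]\}$ and $s(1,c)=1$ for every $c$. For the inductive step with $d(i)=1$, the list $list(i)$ is simply $list(i_1)$ shifted by $w_{i_1}$, so $\approx{s}(i,c)=\approx{s}(i_1,c-w_{i_1})$ while $s(i,c)=s(i_1,c-w_{i_1})$ by (\ref{eqn:kDAG2}); applying the inductive hypothesis and noting $\ell(i)=\ell(i_1)+1$ gives the required bound without even invoking~(\ref{eq:fl1}).

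The real work is in the case $d(i)=2$. Given a target capacity $c$, pick the pair $[c_1,t_1]\in list(i)$ with $\approx{s}(i,c)=t_1$, so $c_1$ is the largest capacity in $list(i)$ not exceeding $c$. By construction $[c_1,t_1]$ originated from~(\ref{eqn:kDAG3}) or~(\ref{eqn:kDAG4}); say it came from (\ref{eqn:kDAG3}) for some $c_1'$ in $list(i_1)$ with $c_1=c_1'+w_{i_1}$ (the other case is symmetric). I would then prove the analogue of~(\ref{eq:kk3}), namely
\[
\approx{s}(i,c)=\approx{s}(i_1,c-w_{i_1})\oplus\approx{s}(i_2,c-w_{i_2}),
\]
by arguing that no capacity of $list(i_1)$ lies strictly between $c_1'$ and $c-w_{i_1}$, and symmetrically no capacity of $list(i_2)$ lies strictly between $c_1'+w_{i_1}-w_{i_2}$ and $c-w_{i_2}$. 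Here is the contradiction argument I expect to be the main obstacle: such an intermediate capacity $c_2$ in $list(i_1)$ would, via~(\ref{eqn:kDAG3}), create a pair in $list'(i)$ with first component $c_2+w_{i_1}\in(c_1,c]$, which must therefore have been pruned; the pruning rule forces its second component to coincide with $t_1$, and then the bimonotonicity of $list(i_1)$ (Property~\textsf{(I$_1$)} at step $i_1$) is violated at $c_1'$ and $c_2$. The same pattern rules out an intermediate capacity in $list(i_2)$.

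Once the identity above is established, I invoke the inductive hypothesis on both $i_1$ and $i_2$, combine via~(\ref{eq:fl1}), and obtain
\[
\bigl(1-2^{1-t(n,\eps)}\bigr)^{1+\max(\ell(i_1),\ell(i_2))} s(i,c)\;\leq\;\approx{s}(i,c)\;\leq\;s(i,c),
\]
where $t(n,\eps)=1+\log(n\log(m/n)/\eps)$. Since $\ell(i)\geq 1+\max(\ell(i_1),\ell(i_2))$ and $2^{1-t(n,\eps)}=\eps/(n\log(m/n))$, this is exactly Property~\textsf{(I$_2$)} for $list(i)$. The upper bound $\approx{s}(i,c)\leq s(i,c)$ follows symmetrically since each $\oplus$ rounds down and each $\approx{s}(i_j,\cdot)\leq s(i_j,\cdot)$ by induction. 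The only subtlety beyond the 0/1 Knapsack proof is bookkeeping the shifts $w_{i_1}\neq w_{i_2}$ in~(\ref{eqn:kDAG3})--(\ref{eqn:kDAG4}), but the bimonotonicity argument carries over once one tracks which of the two source lists contributed the chosen pair.
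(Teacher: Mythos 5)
Your proposal mirrors the paper's proof: induction on topological position, identifying the representative pair $[c_1,t_1]$ in $list(i)$, the contradiction argument via pruning and Property~\textsf{(I$_1$)} to establish $\approx{s}(i,c)=\approx{s}(i_1,c-w_{i_1})\oplus\approx{s}(i_2,c-w_{i_2})$, and then combining the inductive hypothesis with~(\ref{eq:fl1}) using $\ell(i)=1+\max\{\ell(i_1),\ell(i_2)\}$. The only cosmetic difference is that you explicitly treat the $d(i)=1$ case inside the lemma, whereas the paper disposes of it in the surrounding text before stating the lemma; this is correct and equivalent.
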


\begin{proof}
The claim is clear for $i = 1$. For an arbitrary capacity $c \inset{0}{C}$, let $[c_0,t_0]$ in $list(i)$ be such that $\approx{s}(i,c) = t_0$. Therefore, from the definition of $\approx{s}$, we get $\approx{s}(i,c) = \approx{s}(i,c_0)$; from the fact that the pairs in $list(i)$ are of the form (\ref{eqn:kDAG3}) or (\ref{eqn:kDAG4}), we have
\begin{equation}
\approx{s}(i,c) = \approx{s}(i,c_0) = \approx{s}(i_1,c_0 - w_{i_1}) \oplus \approx{s}(i_2,c_0 - w_{i_2}).
\label{eqn:kDAG5}
\end{equation}
There is no capacity $c_1$ in $list(i_1)$ such that $c_0-w_{i_1} < c_1 < c - w_{i_1}$. Indeed, for assuming the contrary, $c_1 + w_{i_1}$ would be a capacity in $list'(i)$, by~(\ref{eqn:kDAG3}). Since we have chosen $c_0$ as the largest capacity in $list(i)$ smaller than $c$, and $c_0 < c_1 + w_{i_1} < c$ holds, this implies that $c_1 + w_{i_1}$ was pruned when passing from $list'(i)$ to $list(i)$; thus, the two pairs of $list'(i)$ having $c_0$ and $c_1 + w_{i_1}$ as first components have equal second components. By (\ref{eqn:kDAG5}) and the bimonotonicity of $list(i_1)$, this entails that also the two pairs of $list(i_1)$ having $c_0-w_{i_1}$ and $c_1$ as first components must have equal second components. This contradicts the fact that $list(i_1)$ satisfies Property~\textsf{(I$_{1}$)}. Therefore, it holds that $\approx{s}(i_1,c_0 - w_{i_1}) = \approx{s}(i_1,c - w_{i_1})$. Analogously, we get $\approx{s}(i_2,c_0 - w_{i_2}) = \approx{s}(i_2,c - w_{i_2})$. 

Plugging these two relations into~(\ref{eqn:kDAG5}) we obtain 
\begin{equation}
\approx{s}(i,c) = \approx{s}(i_1,c - w_{i_1}) \oplus \approx{s}(i_2,c - w_{i_2}).
\label{eq:kk3}
\end{equation}
From~(\ref{eqn:kDAG2}), the fact that Property~\textsf{(I$_{2}$)} holds for $lists(i_1)$ and $lists(i_2)$, from~(\ref{eq:fl1}), and since $\ell(i) = 1 + \max\{\ell(i_1),\ell(i_2)\}$, the relation above implies that $\left(1-\eps/(n\log(\frac{m}{n}))\right)^{\ell(i)} s(i,c) \leq \approx{s}(i,c) \leq s(i,c)$, which shows that Property~\textsf{(I$_{2}$)} holds also for $list(i)$.
\end{proof}

We next prove an analogue of Lemma~\ref{lemma:list'}.

\begin{lemma}
\begin{sloppypar}
We can compute $list'(i)$ and $list(i)$ from $list(i_1)$ and $list(i_2)$ in time $O(n^2\log(\frac{m}{n})\eps^{-1}\lceil\log(1/\eps)/\log n\rceil)$.
\end{sloppypar}
\label{lemma:list'-kDAG}
\end{lemma}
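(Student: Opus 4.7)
The plan is to mirror the two-pointer construction from the proof of Lemma~\ref{lemma:list'}, adapted to the presence of two input lists instead of one. First I would build two auxiliary sub-lists $L_1$ and $L_2$: $L_1$ contains, for every capacity $c_1$ in $list(i_1)$ scanned in increasing order, the pair from~(\ref{eqn:kDAG3}); $L_2$ contains, for every capacity $c_2$ in $list(i_2)$ scanned in increasing order, the pair from~(\ref{eqn:kDAG4}). Then I would merge $L_1$ and $L_2$ into a sorted list $list'(i)$ using the standard two-finger merge of sorted lists, and finally do a linear scan of $list'(i)$ dropping any entry whose second component equals the previous one, to obtain $list(i)$.

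The key efficiency claim is that each of $L_1$ and $L_2$ can be emitted in time proportional to the length of the input list it iterates through. To build $L_1$, for each entry $[c_1,t_1]$ in $list(i_1)$ we have $\approx{s}(i_1,c_1)=t_1$ for free, and we need $\approx{s}(i_2,c_1+w_{i_1}-w_{i_2})$. This is the second component of the pair in $list(i_2)$ with the largest capacity not exceeding $c_1+w_{i_1}-w_{i_2}$, with the convention that it is $0$ if this threshold is negative or no such capacity exists. Since $c_1$ is strictly increasing along the scan, so is the threshold, so a single forward pointer into $list(i_2)$ suffices, giving an amortized $O(1)$ floating-point-operation cost per entry. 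The construction of $L_2$ is entirely symmetric.

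Both $L_1$ and $L_2$ are bimonotonic: their first components are strictly increasing by construction, and their second components are monotone non-decreasing because $\approx{s}(i_1,\cdot)$ and $\approx{s}(i_2,\cdot)$ are monotone non-decreasing in the capacity (by the definition that takes the maximum over capacities $\leq$ the argument), and the $\oplus$ of two non-decreasing quantities is non-decreasing. Merging them yields a bimonotonic $list'(i)$, and the subsequent linear pruning pass enforces strict monotonicity on the second component, giving Property~\textsf{(I$_{1}$)} for $list(i)$.

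For the complexity bound, Property~\textsf{(I$_{1}$)} applied at the preceding step bounds both $|list(i_1)|$ and $|list(i_2)|$ by $O(n^2\log(m/n)\eps^{-1})$. Each emitted pair costs $O(1)$ floating-point additions and comparisons on mantissas of length $1+\log(n\log(m/n)/\eps)$, and each such operation takes $O(\lceil\log(1/\eps)/\log n\rceil)$ time in the unit-cost model on $O(\log C)$-bit integers; multiplying yields the claimed running time. The main obstacle I anticipate is the corner case $c_1+w_{i_1}-w_{i_2}<0$ (when $w_{i_2}>w_{i_1}$): one has to ensure that the monotone forward-pointer sweep into $list(i_2)$ is well-defined and still amortized linear, which is straightforward by declaring $\approx{s}(i_2,c):=0$ for $c<0$, consistently with the convention adopted just after~(\ref{eqn:kDAG1}).
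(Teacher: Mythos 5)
Your proposal is correct and follows essentially the same route as the paper's proof: build the two auxiliary bimonotonic lists from~(\ref{eqn:kDAG3}) and~(\ref{eqn:kDAG4}) via a monotone forward-pointer sweep into the other list (including the same handling of the $c_1+w_{i_1}-w_{i_2}<0$ corner case), merge them, and prune by a linear scan, with the running time coming from the $O(n^2\log(\frac{m}{n})\eps^{-1})$ length bound times the $O(\lceil\log(1/\eps)/\log n\rceil)$ cost per floating-point operation. Your added justification that the second components of $L_1,L_2$ are non-decreasing (monotonicity of $\approx{s}(\cdot,\cdot)$ in the capacity and of truncated $\oplus$) makes explicit a step the paper leaves implicit, but the argument is the same.
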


\begin{proof}
At a generic step $i \inset{1}{n'}$, we compute $list'(i)$ as follows. We construct two auxiliary bimonotonic lists of pairs, $list_1(i)$ and $list_2(i)$. For every capacity $c_1$ in $list(i_1)$, $list_1(i)$ contains the pairs $[c_1+w_{i_1},\approx{s}(i_1,c_1) \oplus \approx{s}(i_2,c_1 + w_{i_1} - w_{i_2})]$. Analogously, for every capacity $c_2$ in $list(i_2)$, $list_2(i)$ contains the pairs $[c_2+w_{i_2},\approx{s}(i_1,c_2+w_{i_2}-w_{i_1}) \oplus \approx{s}(i_2,c_2)]$. List $list'(i)$ is now obtained by merging in a unique sorted list the lists $list_1(i)$ and $list_2(i)$. In order to prune the resulting bimonotonic list $list'(i)$ for obtaining $list(i)$, we do a linear scan.

In order to compute $list_1(i)$, proceed as follows (the computation of $list_2(i)$ is entirely analogous). Keep two pointers, $left$ in $list(i_1)$ and $right$ in $list(i_2)$. Pointer $left$ is initially set to the first pair in $list(i_1)$, say $[c_1,t_1]$. Pointer $right$ is set to the first pair in $list(i_2)$. If $c_1 + w_{i_1} - w_{i_2} < 0$, then we append the pair $[c_1+w_{i_1},t_1 ]$ at the end of $list_1(i)$. Otherwise, pointer $right$ starts scanning $list(i_2)$ until reaching a pair $[c_2,t_2]$, such that $c_1 + w_{i_1} - w_{i_2} \geq c_2$ and either $[c_2,t_2]$ is the last pair in $list(i_2)$, or $[c_2,t_2]$ is immediately followed by a pair $[c_3,t_3]$ with the property $c_1 + w_{i_1} - w_{i_2} < c_3$. Append the pair $[c_1+w_{i_1},t_1 \oplus t_2]$ at the end of $list_1(i)$. 

Afterwards, advance pointer $left$ to the next pair in $list(i_1)$, and repeat the above procedure, by advancing pointer $right$ to the corresponding pair, and inserting a new resulting pair in $list_1(i)$. This is repeated until pointer $left$ reaches the end of $list(i_1)$.

This completes the proof, since additions and comparisons on $O(\log C)$-bit numbers take unit time, floating-point additions and comparisons take $O(\lceil\log(1/\eps)/\log n\rceil)$ time, and the length of $list(i_1)$ and $list(i_2)$ is $O(n^2\log(\frac{m}{n})\eps^{-1})$.
\end{proof}

Thm.~\ref{thm:kDAG-main} follows from the facts that the transformed DAG $D'$ has $O(m)$ vertices, the length of the longest path in $D'$ is $O(n\log(\frac{m}{n}))$, and from Lemmas~\ref{lemma:list'-kDAG-0} and \ref{lemma:list'-kDAG}.

\end{document}